\documentclass[letterpaper,10 pt, conference]{ieeeconf}
\IEEEoverridecommandlockouts                         
\usepackage{fancyhdr}
\usepackage{hyperref}
\usepackage{xspace}
\usepackage[font={small}]{caption}
\usepackage{color}
\usepackage{setspace}
\usepackage{enumerate}
\usepackage{graphics}
\usepackage{cite}
\usepackage{dsfont}
\usepackage{latexsym}
\usepackage{amsmath}
\usepackage{amssymb}
\usepackage{amsfonts}
\usepackage{amsthm}
\usepackage{times}
\usepackage{color}
\usepackage{verbatim}
\usepackage{xcolor}
\usepackage{pgfplots}
\usepackage{epsfig} 
\usepackage{subcaption}
\usepackage{hyperref}
\usepackage{algorithm}
\usepackage{algpseudocode}
\usepackage{apptools}
\AtAppendix{\counterwithin{lemma}{subsection}}
\let\labelindent\relax
\usepackage{enumitem}
\newlist{inparaenum}{enumerate}{2}
\setlist[inparaenum]{nosep}
\setlist[inparaenum,1]{label=\bfseries\arabic*.}
\setlist[inparaenum,2]{label=\arabic{inparaenumi}\emph{\alph*})}

\def\bs{\boldsymbol}

\def\mcal{\mathcal}

\def\bs{\boldsymbol}

\def\mcal{\mathcal}
\def\mcc{\mathcal{C}}
\def\mcx{\mathcal{X}}
\def\mcy{\mathcal{Y}}

\def\mcs{\mathcal{S}}
\def\mct{\mathcal{T}}

\def\bv{{\boldsymbol{v}}}
\def\balph{{\boldsymbol{\alpha}}}

\def\E{\mathbb{E}}

\def\F{\mathbb{F}}

\def\Rp{\mathbb{R}_{\geq 0}}

\def\GL{\text{GL}} 
\def\RL{\text{RL}} 

\def\UB{\text{UB}}
\def\LB{\text{LB}}

\def\indicator{\mathds{1}}

\def\numbfs{C} 
\def\BFs{\mathcal{C}} 
\def\bf{c} 
\def\overbfs{{\bf\in\BFs}}
\def\xsim{{\boldsymbol{x}\sim F_\mcx}}
\def\ysim{{\boldsymbol{y}\sim F_\mcy}}
\def\dxbf{\delta_{\bf}} 
\def\dy{\delta_\mcy} 
\def\abf{\alpha_\bf} 

\def\be{\begin{equation}}
\def\ee{\end{equation}}
\def\ba{\begin{aligned}}
\def\ea{\end{aligned}}
\def\ben{\begin{enumerate}}
\def\een{\end{enumerate}}
\def\bi{\begin{itemize}}
\def\ei{\end{itemize}}

\def\v2{\vspace{2mm}}

\theoremstyle{plain}
\newtheorem{theorem}{Theorem}[section]

\newtheorem{lemma}{Lemma}[section]

\newtheorem{definition}{Definition}

\def\bs{\boldsymbol}

\normalsize\title{\LARGE Beyond Arbitrary Allocations: \\ Security Values in Constrained General Lotto Games}

\author{
	Keith Paarporn and Jason R. Marden
	\thanks{K. Paarporn is with the Department of Computer Science, University of Colorado, Colorado Springs. 
	J. R. Marden is with the Department of Electrical and Computer Engineering, University of California, Santa Barbara. 
	Contact: \texttt{kpaarpor@uccs.edu}, \texttt{jrmarden@ece.ucsb.edu}. This work is supported in part by NSF grant \#ECCS-2346791.}
}

\begin{document}
\thispagestyle{plain}
\pagestyle{plain}
\pagestyle{empty}
\maketitle

\begin{abstract}
	Resource allocation problems across multiple contests are ubiquitous in adversarial settings, from military operations to market competition. While Colonel Blotto and General Lotto games have provided valuable theoretical foundations for such problems, their equilibrium characterizations typically permit resources to be arbitrarily allocated across all contests -- a flexibility that rarely aligns with practical constraints. This paper introduces a novel constrained variant of the General Lotto game where one player is restricted to allocating resources to only a single contest. In this model we provide lower and upper bounds on the security values for this constrained player, quantifying how the inability to distribute resources across multiple contests fundamentally changes optimal strategic behavior and performance guarantees. These findings contribute to a broader understanding of how operational constraints shape strategic outcomes in competitive resource allocation, with implications for decision-makers facing similar constraints in practice.
\end{abstract}

\section{Introduction}

Resource allocation problems are fundamental across numerous domains including control systems, operations research, and economic competition. 
These problems are characterized by three key features: limited resources that must be distributed optimally, adversarial behavior where allocation decisions are made in competition with opponents, and the need for optimization to maximize objectives under constraints. 
The strategic nature of these decisions manifests in diverse real-world scenarios -- from firms allocating capital across market sectors to maximize competitive advantage, to defense agencies distributing security resources to protect critical infrastructure against intelligent adversaries. 
In each case, decision-makers must identify \emph{admissible} allocation strategies, i.e., those that satisfy budget constraints, physical limitations, and operational requirements, to optimize the objective at hand. 

The study of competitive resource allocation has a rich history, with Colonel Blotto and General Lotto games serving as foundational mathematical frameworks for analyzing these strategic interactions \cite{Gross_1950,Roberson_2006,Schwartz_2014,kovenock2021generalizations}. 
These game-theoretic models have served as benchmark problems that have stimulated extensive research, enabling equilibrium characterizations across diverse environments, e.g., from symmetric to asymmetric contests \cite{Vu_EC2021,kovenock2021generalizations,paarporn2024reinforcement}, from complete to incomplete information structures \cite{paarporn2024incomplete,diaz2025value}, and from independent to interdependent contest structures \cite{Shahrivar_2014,Guan_2019,kovenock2021generalizations,aghajan2026defense}. 
The insights derived from these equilibrium analyses have identified salient properties of optimal resource allocation strategies, providing valuable guidance to practitioners in fields where strategic resource allocation is critical.

Despite these theoretical advances, existing equilibrium results face practical limitations that restrict their applicability. 
Typically, these equilibrium characterizations exhibit behavior where resources are arbitrarily allocated over contests, thereby potentially distributing resources in infinitesimally small amounts across all contests. 
Such flexibility rarely reflects real-world constraints. 
In defensive operations, geographic distance between battlefronts makes simultaneous deployment impractical \cite{shishika2022dynamic}, and attacks are often planned for single salient targets \cite{chowdhury2021focality}; 
in corporate settings, entering multiple markets simultaneously dilutes focus and expertise \cite{maljkovic2024blotto}; 
in cybersecurity, defender resources often cannot be fractionally distributed across all potential vulnerabilities \cite{linkov2019fundamental,iliaev2023tullock}.
These practical realities necessitate more localized deployments with a concentration in specific contests rather than resources spread thinly across many fronts. 
This disconnect between theory and practice raises a critical question: \emph{how do equilibrium conditions extend to environments where allocation decisions are structurally constrained?}

To address this gap, we introduce a novel model of resource allocation with constrained allocation decisions. 
We formulate a variant of the General Lotto game in which one player is restricted to allocating resources to only a single contest, fundamentally changing their strategic problem from distribution across multiple contests to selecting which single contest to target and how to randomize their allocation within that contest. 
Our main technical contribution, given in Theorem~\ref{thm:LBUB}, provides both lower and upper bounds on the security value of this constrained player. 
The gap between this security value and those in the classic General Lotto game showcases the impact of localization constraints in adversarial resource allocation. 
To bridge these extreme cases and provide a more complete understanding, we also conduct numerical simulations exploring the intermediate cases where players can allocate to exactly 2 contests, 
revealing how security values converge toward the unconstrained equilibrium as allocation flexibility increases.

The central focus of this paper is understanding how admissibility constraints impact equilibrium behavior and performance guarantees in adversarial resource allocation. 
While we primarily examine single-contest allocation constraints, the fundamental questions we address -- how constraints affect security values and strategic behavior -- extend naturally to other important models. 
A particularly relevant extension involves incorporating minimum investment thresholds, where players allocating to any contest must commit at least some minimum amount of resources. 
Our model can be interpreted as a special case where this minimum threshold exceeds half the available resources, effectively forcing allocation to at most one contest. 
Characterizing equilibria for these intermediate constraint models would bridge the gap between fully flexible and highly constrained allocation paradigms, advancing both theoretical understanding and practical guidance for strategic resource deployment.

\section{Problem formulation}

In this section, we first detail the classic General Lotto game in which players' allocation strategies are unrestricted.
We then propose our model formulation for a General Lotto game where one of the player's allocation strategies are restricted.

\subsection{Classic General Lotto games}

Two players $\mcx$ and $\mcy$ compete over a collection of valuable contests, $\BFs := \{1,\ldots,\numbfs\}$.
The contests have associated positive valuations $v_i > 0$ for each $\bf \in \BFs$.
We denote the vector of valuations as $\bs{v} := [v_1,\ldots,v_\numbfs]^\top$, 
and furthermore will assume a normalized total value, $\sum_{\overbfs} v_\bf = 1$.

A resource allocation for player $\mcx$ is a vector $\bs{x} = [x_1,\ldots,x_\numbfs]^\top \in \Rp^\numbfs$,
and a resource allocation for player $\mcx$ is a vector $\bs{y} = [y_1,\ldots,y_\numbfs]^\top \in \Rp^\numbfs$.
Given resource allocations $\bs{x},\bs{y}$, the payoff to player $\mcx$ is defined as
\begin{equation}\label{eq:Xpayoff_pure}
	\pi_\mcx(\bs{x},\bs{y}) := \sum_{\overbfs} v_\bf \cdot \mathds{1}\{ x_\bf \geq y_\bf \}
\end{equation}
where $\indicator\{ \cdot \}$ is the indicator function that evaluates to 1 if the statement is true, and 0 otherwise.
Under the payoff function~\eqref{eq:Xpayoff_pure}, player $\mcx$ defeats player $\mcy$ on the contests in which it allocates more resources, with ties being awarded to player $\mcx$. 
Subsequently, the payoff to player $\mcy$ is given by the sum of contest valuations that player $\mcx$ has not secured,
$\pi_\mcy(\bs{x},\bs{y}) := 1 - \pi_\mcx(\bs{x},\bs{y})$.

In the General Lotto game, the players $\mcx$ and $\mcy$ are able to randomize their allocations in any way as long as their total expected allocations do not exceed their fixed resource budgets, $X>0$ and $Y>0$, respectively.
Formally, an admissible strategy for player $\mcx$ is any cumulative distribution function (CDF) $F_\mcx$ over $\Rp^\numbfs$ that belongs to the set of distributions
\begin{equation}\label{eq:lotto_unrestricted_constraint}
	\F(X) := \left\{ F_\mcx : \text{supp}(F_\mcx) \subseteq \Rp^\numbfs, \ \E_{\bs{x}\sim F_\mcx}\left[\sum_{\overbfs} x_\bf \right] \leq X \right\}
\end{equation}
and similarly, player $\mcy$ can select any $F_\mcy \in \F(Y)$.
Given admissible strategies $(F_\mcx,F_\mcy) \in \F(X)\times\F(Y)$, we will refer to the expected payoff to player $\mcx$ (with slight abuse of notation) as
\begin{equation}\label{eq:Xpayoff}
	\pi_\mcx(F_\mcx,F_\mcy) = \E_{\substack{\bs{x} \sim F_\mcx \\ \bs{y} \sim F_\mcy}}\left[ \pi_\mcx(\bs{x},\bs{y}) \right],
\end{equation}
and subsequently, the expected payoff to player $\mcy$ is $\pi_\mcy(F_\mcx,F_\mcy) = 1 - \pi_\mcx(F_\mcx,F_\mcy)$.
The admissible set of strategies and the expected payoffs define a two-player simultaneous-move game called the \emph{General Lotto game}.
We refer to a given instance of this game as $\GL(X,Y;\bs{v})$. 

We note that $\GL(X,Y;\bs{v})$ is a two-player constant-sum game. 
Player $\mcy$'s objective is thus equivalent to \emph{minimizing} the payoff to player $\mcx$.

There are two quantities of interest regarding the performance of player $\mcx$: the \emph{max-min} and \emph{min-max} values,
\begin{equation}\label{eq:maxmin_minmax}
	\begin{aligned}
		&\max_{F_\mcx\in\F(X)} \min_{F_\mcy\in\F(Y)} \pi_\mcx(F_\mcx,F_\mcy) \\
		&\min_{F_\mcy\in\F(Y)} \max_{F_\mcx\in\F(X)} \pi_\mcx(F_\mcx,F_\mcy).
	\end{aligned}
\end{equation}
The max-min value (first line above) is the best payoff that player $\mcx$ can {guarantee}, regardless of the strategy of player $\mcy$.
The min-max value (second line above) is the lowest payoff for player $\mcx$ that player $\mcy$ can guarantee, regardless of the strategy of player $\mcx$.
The max-min inequality  states that
\begin{equation}\label{eq:SV_inequality}
	\begin{aligned}
		&\max_{F_\mcx\in\F(X)} \min_{F_\mcy\in\F(Y)} \pi_\mcx(F_\mcx,F_\mcy) \\
		&\quad\quad \leq \min_{F_\mcy\in\F(Y)} \max_{F_\mcx\in\F(X)} \pi_\mcx(F_\mcx,F_\mcy).
	\end{aligned}
\end{equation}
If it is the case that they are equivalent, then their common value is referred to as the \emph{equilibrium value}. The well-known result below characterizes the equilibrium value of $\GL(X,Y;\bs{v})$.

\begin{theorem}[Adapted from \cite{Hart_2008}]\label{thm:GL}
	Consider an instance of $\GL(X,Y;\bs{v})$. 
	Then the max-min and min-max values are equivalent.
	The equilibrium value, which we denote $\pi_\mcx^\GL$, is given by
	\begin{equation}\label{eq:GL_equil} 
    	\pi_\mcx^\GL(X,Y) := 
        \begin{cases}
            \frac{X}{2Y}, &\text{if } X \leq Y \\
			1 - \frac{Y}{2X}, &\text{if } X > Y
        \end{cases}.
    \end{equation}
\end{theorem}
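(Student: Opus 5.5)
The plan is to exhibit, in each of the two cases, an explicit pair of admissible strategies $(F_\mcx^*,F_\mcy^*)$ together with a value $V$ such that
\[
\min_{F_\mcy\in\F(Y)}\pi_\mcx(F_\mcx^*,F_\mcy)\ \ge\ V \ \ge\ \max_{F_\mcx\in\F(X)}\pi_\mcx(F_\mcx,F_\mcy^*).
\]
Combined with the max-min inequality \eqref{eq:SV_inequality}, this forces max-min $=$ min-max $=V$.

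The first reduction uses linearity. Since $\pi_\mcx(F_\mcx,F_\mcy)=\sum_{\overbfs} v_\bf\,\Pr(x_\bf\ge y_\bf)$ and $\bs{x},\bs{y}$ are drawn independently, the payoff depends only on the one-dimensional marginals of $F_\mcx$ and $F_\mcy$. The budget constraint in \eqref{eq:lotto_unrestricted_constraint} likewise involves only the sum of the marginal means. So I will specify marginals and take product (independent) joint distributions. The candidate marginals are Hart's, rescaled proportionally to the valuations.

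\emph{Case $X\le Y$.} Player $\mcy$ draws $y_\bf\sim\mathrm{Unif}[0,2Yv_\bf]$, which has mean $Yv_\bf$, so its total expected allocation is $Y$. Player $\mcx$ places $x_\bf=0$ with probability $1-X/Y$ and otherwise draws $x_\bf\sim\mathrm{Unif}[0,2Yv_\bf]$; its mean is $Xv_\bf$, so its total is $X$.
\begin{itemize}
\item Against $F_\mcy^*$, any pure $\bs{x}$ earns $\sum_\bf v_\bf\min\{x_\bf/(2Yv_\bf),1\}\le \sum_\bf x_\bf/(2Y)$. Taking expectations over any $F_\mcx\in\F(X)$ gives at most $X/(2Y)$.
\item Against $F_\mcx^*$, any $y_\bf\ge0$ is beaten with probability at least $\frac{X}{Y}\bigl(1-\frac{y_\bf}{2Yv_\bf}\bigr)$. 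This remains valid at $y_\bf=0$, where the probability is $1$ because ties go to $\mcx$. Weighting by $v_\bf$ and summing gives at least $X/Y - \E[\sum_\bf y_\bf]/(2Y^2)\cdot X \ge X/Y - X/(2Y) = X/(2Y)$.
\end{itemize}

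\emph{Case $X>Y$.} The roles are mirrored. Player $\mcx$ draws $x_\bf\sim\mathrm{Unif}[0,2Xv_\bf]$. Player $\mcy$ places $0$ with probability $1-Y/X$ and otherwise draws $\mathrm{Unif}[0,2Xv_\bf]$.
\begin{itemize}
\item Against $F_\mcx^*$, any $y_\bf$ is beaten with probability $1-\min\{y_\bf/(2Xv_\bf),1\}\ge 1-y_\bf/(2Xv_\bf)$. Summing gives at least $1-Y/(2X)$.
\item Against $F_\mcy^*$, any $x_\bf$ wins with probability at most $(1-Y/X)+\frac{Y}{X}\cdot\frac{x_\bf}{2Xv_\bf}$. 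Here the atom of $\mcy$ at zero is counted as a win for $\mcx$. Summing and using $\E[\sum_\bf x_\bf]\le X$ gives at most $1-Y/X+Y/(2X)=1-Y/(2X)$.
\end{itemize}

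The main point needing care is the tie rule. Because ties favor $\mcx$, the only atoms should be at $0$, and in each bound it must be checked in which direction the atom enters. The uniform parts are atomless, so ties there have probability zero. The other subtlety is the claim that independence and marginals suffice. This holds because the payoff is additive across contests and the constraint is only on expected total allocation, so any choice of marginals with the right means is admissible.
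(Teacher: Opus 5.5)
Your proof is correct. It necessarily takes a different route from the paper, because the paper gives no proof at all: Theorem~\ref{thm:GL} is imported from Hart's analysis of the Lotto game between random variables with prescribed means, and the weighted, multi-contest, ties-to-$\mcx$ version is asserted as an adaptation.

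You instead verify an explicit saddle point directly, and your strategies are precisely the paper's own uniform-plus-atom family.
\begin{itemize}
\item Your $F_\mcx^*$ is the member of $\hat{\F}(X)$ in Definition~\ref{def:Xclass} with $\alpha_c = v_c$ and $\delta_c=\min\{X/Y,1\}$.
\item Your $F_\mcy^*$ is the analogous member of that family with $X$ and $Y$ interchanged ($\alpha_c=v_c$, $\delta_c=\min\{Y/X,1\}$).
\item Your key step is the same device used in Lemmas~\ref{lem:LB} and~\ref{lem:UB_opt}: relax $\min\{a,b\}\le a$, then use only that the budget bounds the expected total allocation.
\end{itemize}

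Your route buys self-containedness in three ways.
\begin{itemize}
\item It handles the valuations directly. Scaling allocations proportionally to $v_c$ makes the value depend only on $X/Y$ and $\sum_c v_c$, which the paper merely remarks on.
\item It checks explicitly that the asymmetric tie rule does not change Hart's value. At the saddle point at most one player has an atom and the other is atomless, and off-path ties are counted correctly in each bound.
\item It shows that the outer $\max$ and $\min$ in \eqref{eq:maxmin_minmax} are actually attained, which the statement implicitly asserts.
\end{itemize}

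The citation buys brevity and access to the fuller equilibrium analysis in the literature, such as the structure of optimal strategies. Your verification does not attempt that, but the statement does not require it.
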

We note that in $\GL(X,Y;\bv)$, the equilibrium value depends only on the ratio of the players' budgets and the total value of all contests $\sum_{\overbfs} v_\bf$. Here, since $\bs{v}$ is normalized, the total value is just 1.
Any profile $(F_\mcx^*,F_\mcy^*) \in \F(X)\times\F(Y)$ for which $F_\mcx^*$ solves the max-min problem and $F_\mcy^*$ solves the min-max problem constitutes a Nash equilibrium of the game.
That is, it satisfies the condition $\pi_\mcx(F_\mcx,F_\mcy^*) \leq \pi_\mcx(F_\mcx^*,F_\mcy^*) \leq \pi_\mcx(F_\mcx^*,F_\mcy)$
for any $F_\mcx \in \F(X)$ and $F_\mcy \in F(Y)$.

\subsection{General Lotto Game With Restricted Allocations}

In this section, we propose a formulation of the General Lotto game in which player $\mcy$ is unable to allocate resources simultaneously to more than a single contest. 
That is, a resource allocation for $\mcy$ is any vector $\bs{y} = [y_1,\ldots,y_C]^\top$ that belongs to
\begin{equation}
	\mcal{R}^1 := \left\{ \bs{y} \in \Rp^\numbfs : \bs{y} = t\cdot\bs{e}_\bf \text{ for some } \bf\in\BFs, t \geq 0 \right\}
\end{equation}
where $\bs{e}_c$ is the unit Euclidean vector in component $\bf\in\BFs$.
In other words, player $\mcy$ can only send resources to a single contest and its allocation to all other contests are zero.
There is no such restriction for player $\mcx$ -- a resource allocation for $\mcx$ is still any vector $\bs{x} = [x_1,\ldots,x_C]^\top \in \Rp^\numbfs$.


Subsequently, an admissible strategy for player $\mcy$ is any CDF $F_\mcy$ that belongs to the set of distributions
\begin{equation}\label{eq:lotto_restricted_constraint}
	\F^1(Y) := \left\{ F_\mcy \in \F(Y) : \text{supp}(F_\mcy) \subseteq \mcal{R}^1 \right\}.
\end{equation}
An admissible strategy for player $\mcx$ is, as before, any CDF $F_\mcx \in \F(X)$.
For any profile of admissible strategies $(F_\mcx,F_\mcy) \in \F(X)\times \F^1(Y)$, the expected payoff to $\mcx$ is given as in \eqref{eq:Xpayoff}, and the expected payoff to $\mcy$ is $1 - \pi_\mcx(F_\mcx,F_\mcy)$. 
This formulation defines a two-player simultaneous-move game that we term the \emph{Restricted Lotto game}.
We refer to a given instance of this game as $\RL^1(X,Y;\bs{v})$.

Just as in the classic $\GL$ game, we are interested in investigating the max-min and min-max values for the restricted game $\RL^1(X,Y;\bs{v})$.
The main difference here is that player $\mcy$ is restricted to the strategy set $\F^1(Y)$, i.e., does not have full access to strategies in $\F(Y)$.
Therefore, we consider the max-min and min-max values to be of the form \eqref{eq:maxmin_minmax}, where the set $\F(Y)$ is replaced with $\F^1(Y)$.
In this context, a corresponding restricted version of \eqref{eq:SV_inequality} still holds, i.e., the max-min value is no greater than the min-max value.

\section{Results}

In this section, we state the main result of the paper, Theorem \ref{thm:LBUB}. 
It characterizes a lower bound on the max-min value, and an upper bound on the min-max value, in the restricted game $\RL^1(X,Y;\bs{v})$.
We offer numerical validations of our result that show these bounds are tight in most cases, leading to equilibrium characterizations.
We then conduct simulation experiments that explore generalized scenarios where $\mcy$ can allocate to more than just one contest.

\subsection{Main theoretical result}

For ease of exposition, we will assume that all contests have distinct values such that they can be ordered from highest to lowest, $v_1 > v_2 > \cdots > v_\numbfs$.
We then denote $[k] \subseteq \BFs$ as the top $k$ contests in value, where $k \in \{1,\ldots,\numbfs\}$.
All of our results in this paper can be generalized to the case where not all contests are distinct.
Our main result of the paper is stated below.

\begin{theorem}\label{thm:LBUB}
	Consider any game $\RL^1(X,Y;\bs{v})$.

	\noindent 1) (Lower bound on max-min) It holds that
	\begin{equation}\label{eq:thm_LB}
		\max_{F_\mcx\in\F(X)} \min_{F_\mcy\in\F^1(Y)} \pi_\mcx(F_\mcx,F_\mcy) \geq \max_{\balph \in \Delta} \ (1 - H(\balph)) 
	\end{equation}
	where 
	\begin{equation}\label{eq:Hc}
		\begin{aligned}
			H(\balph) &:= \max_\overbfs H_\bf(\alpha_\bf) \\
			H_\bf(\alpha_\bf) &:= 
			\begin{cases}
				v_c\left(1 - \frac{\alpha_c X}{2Y}\right), &\text{if } \alpha_c < \frac{Y}{X} \\
				v_c\frac{Y}{2\alpha_c X}, &\text{if } \alpha_c \geq \frac{Y}{X} \\
			\end{cases}
		\end{aligned}
	\end{equation}
	for each $\overbfs$, and $\Delta$ is the probability simplex on $\BFs$.

	\noindent 2) (Upper bound on min-max) It holds that
	\begin{equation}\label{eq:thm_UB}
		\min_{F_\mcy \in \F^1(Y)} \max_{F_\mcx\in\F(X)} \pi_\mcx(F_\mcx,F_\mcy) \leq \min_{k=1,2,\ldots,\numbfs} \UB(\bs{q}([k]))
	\end{equation}
	where $\UB(\bs{p}) : \Delta \rightarrow [0,1]$ is defined as
	\begin{equation}\label{eq:UB_obj}
		\small
		\begin{cases}
			1 - \frac{Y}{2X}\frac{(\bv^\top \bs{p})^2}{\max_\overbfs v_\bf p_\bf}, &\text{if } \frac{Y}{X}\bv^\top \bs{p} - \max_\overbfs v_\bf p_\bf \leq 0 \\
			1 - \bv^\top \bs{p} + \frac{X}{2Y} \max_\overbfs v_\bf p_\bf, &\text{if } \frac{Y}{X}\bv^\top \bs{p} - \max_\overbfs v_\bf p_\bf > 0\\
		\end{cases}
	\end{equation}
	and $\bs{q}(\mcs) \in \Delta$ is defined for any subset of contests $\mcs \subseteq\BFs$ as
	\begin{equation}\label{eq:UB_q_explicit}
		q_\bf(\mcs) := 
		\begin{cases}
			\frac{\prod_{i\in \mcs\setminus \bf} v_j }{ \sum_{j \in \mcs} \left( \prod_{i\in \mcs\setminus j} v_i \right)  } &\text{if } \bf \in \mcs \\
			0, &\text{if } \bf \notin \mcs
		\end{cases}
	\end{equation}
	for each $\overbfs$, where $\mcs\setminus c$ denotes all elements in $\mcs$ not including $c$.
\end{theorem}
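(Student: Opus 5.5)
The plan is to prove both parts by explicit construction. For part 1 we exhibit a strategy for $\mcx$ and bound every response of $\mcy$; for part 2 we exhibit a strategy for $\mcy$ and compute $\mcx$'s best response. In both directions the key simplification is that a strategy of $\mcy$ in $\F^1(Y)$ is just a joint law over pairs $(\bf,t)$, meaning "send $t\ge 0$ units to contest $\bf$", subject only to $\E[t]\le Y$.

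\textbf{Part 1.} Fix $\balph\in\Delta$. Let $\mcx$ play independently on each contest $\bf$ the one-contest General Lotto equilibrium strategy with budget $\alpha_\bf X$ against an opponent of budget $Y$, i.e.\ the strategy of Theorem~\ref{thm:GL} with a single contest. Concretely:
\begin{itemize}
\item if $\alpha_\bf X\ge Y$, then $x_\bf$ is uniform on $[0,2\alpha_\bf X]$;
\item if $\alpha_\bf X<Y$, then $x_\bf$ is a mixture: an atom at $0$ with mass $1-\alpha_\bf X/Y$, and uniform on $[0,2Y]$ otherwise.
\end{itemize}
This strategy lies in $\F(X)$. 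Let $g_\bf(t)=\Pr(x_\bf<t)$, the probability that $\mcy$ wins contest $\bf$ with allocation $t$. Note $g_\bf(0)=0$ because ties go to $\mcx$. The payoff of $\mcy$ is then $\E[v_\bf g_\bf(t)]$, and maximizing it over laws with $\E t\le Y$ is a concave-envelope / Lagrangian problem:
\begin{equation*}
\sup_{F_\mcy\in\F^1(Y)} \E[v_\bf g_\bf(t)] \;\le\; \inf_{\mu\ge 0}\Big(\mu Y+\max_{\bf}\sup_{t\ge 0}\big(v_\bf g_\bf(t)-\mu t\big)\Big).
\end{equation*}
On $(0,\infty)$ each $g_\bf$ is affine up to its cap at $1$. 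Therefore $v_\bf g_\bf$ evaluated at $t=Y$ equals exactly $H_\bf(\alpha_\bf)$; this is how the two cases of \eqref{eq:Hc} arise, namely $v_\bf(1-\pi^\GL_\mcx(\alpha_\bf X,Y))$. The goal is to pick $\mu$ so that the right-hand side equals $\max_\bf H_\bf(\alpha_\bf)=H(\balph)$. Taking the best $\balph$ then gives \eqref{eq:thm_LB}.

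\textbf{Main obstacle (part 1).} The per-contest slopes differ: $v_\bf/(2\alpha_\bf X)$ in one regime and $v_\bf\alpha_\bf X/(2Y^2)$ in the other. Moreover the jump of $g_\bf$ at $0^+$ lets $\mcy$ combine nearly-free small bids on one contest with large bids on another. The single-$\mu$ dual certificate is therefore not automatic for arbitrary $\balph$. I would first try to show it holds at a maximizer $\balph^*$ of $1-H(\balph)$. There the active $H_\bf$ are equalized, and I would then check that the supporting lines can be chosen with a common slope. If that fails, the fallback is to restrict the maximization in \eqref{eq:thm_LB} to the $\balph$ where the certificate exists.

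\textbf{Part 2.} Fix $\bs p\in\Delta$. Let $\mcy$ pick contest $\bf$ with probability $p_\bf$ and then draw $t$ from a one-contest Lotto-type law with mean $Y$, namely uniform on $[0,2Y]$ possibly mixed with an atom at $0$ whose weight is a free parameter. Against this, $\mcx$'s payoff decomposes as
\begin{equation*}
1-\bv^\top\bs p+\sum_\bf v_\bf p_\bf \Pr(t\le x_\bf).
\end{equation*}
Here $\Pr(t\le x)$ is affine in $x$ up to saturation. So $\mcx$'s best response, found by the same Lagrangian argument, concentrates its budget on the contest with the largest marginal value, i.e.\ $\arg\max_\bf v_\bf p_\bf$. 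The payoff is then an explicit function of $\max_\bf v_\bf p_\bf$, $\bv^\top\bs p$ and $Y/X$. Optimizing $\mcy$'s atom weight yields the two regimes of \eqref{eq:UB_obj}, separated by the sign of $\frac{Y}{X}\bv^\top\bs p-\max_\bf v_\bf p_\bf$; this is routine single-variable calculus.

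Finally, for a support $[k]$, the choice $\bs p=\bs q([k])$ equalizes $v_\bf p_\bf$ on $[k]$, since $q_\bf\propto 1/v_\bf$ there. This makes $\max_\bf v_\bf p_\bf$ as small as possible relative to $\bv^\top\bs p$. Minimizing over $k$ gives \eqref{eq:thm_UB}. I expect part 2 to be straightforward once the best-response structure is established; the real difficulty is the common-multiplier issue in part 1.
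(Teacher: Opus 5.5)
Your Part~2 is correct and is essentially the paper's Lemma~\ref{lem:UB_opt}. There is one slip. With an atom of weight $1-\delta$ at the zero allocation, the uniform part must live on $[0,2Y/\delta]$, not $[0,2Y]$, for the mean to be $Y$. With your scaling the bound is linear in $\delta$, so the first regime of \eqref{eq:UB_obj} never appears. With the correct scaling, $\pi_\mcx\le 1-\delta\,\bv^\top\bs p+\delta^2\frac{X}{2Y}\max_c v_cp_c$, and minimizing over $\delta$ gives $\UB(\bs p)$. For \eqref{eq:thm_UB} you then only need that each $\bs q([k])$ lies in $\Delta$. Neither your equalization heuristic nor the paper's appendix proving $\min_{\bs p}\UB(\bs p)=\min_k\UB(\bs q([k]))$ is needed for the inequality.

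\textbf{Part~1 has a genuine gap, and it cannot be closed.} The common-multiplier obstacle you flagged is real, it persists at the maximizer, and \eqref{eq:thm_LB} is in fact false.

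\emph{The counterexample.} Take $C=2$, $\bv=(0.6,0.4)$, $X=1$, $Y=2$. Since $\alpha_c\le 1<Y/X$, we have $H_c(\alpha_c)=v_c(1-\alpha_c/4)$. The choice $\balph=(1,0)$ gives $H=\max\{0.45,0.4\}$, so the right side of \eqref{eq:thm_LB} is at least $0.55$.

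\emph{Your certificate fails at this $\balph$.} Here $x_2\equiv 0$, so $v_2g_2\equiv 0.4$ on $(0,\infty)$, while $v_1g_1(4)=0.6$. Any dominating line $a+\mu t$ therefore has $a\ge 0.4$ and $a+4\mu\ge 0.6$, hence value at least $0.5$ at $t=Y$. Player $\mcy$ attains $0.5$ in the limit: with probability $1/2$ each, it sends $\eta\to 0^+$ to contest~2 or $4-\eta$ to contest~1.

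\emph{No strategy of $\mcx$ reaches $0.55$.} Let $\mcy$ attack contest~1 with probability $0.8$ using $y_1\sim U[0,4.8]$, and contest~2 with probability $0.2$ using $y_2\sim U[0,0.8]$; the mean is exactly $2$. For every $\bs x$,
\[
\pi_\mcx=0.6\bigl(0.2+0.8\min\{x_1/4.8,1\}\bigr)+0.4\bigl(0.8+0.2\min\{x_2/0.8,1\}\bigr)\le 0.44+0.1(x_1+x_2).
\]
So every $F_\mcx\in\F(X)$ gets at most $0.54$. Hence the min-max, and therefore the max-min, is at most $0.54<0.55$.

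\emph{Where the paper's proof goes wrong.} The paper passes over exactly your obstacle by asserting $\E[y_c\mid y_c>0]=Y$ for every $c$. The budget only gives $\sum_c P_c\,\E[y_c\mid y_c>0]\le Y$ (and $\sum_c P_c\le 1$). With the correct constraint, the affine bound yields $\max_c v_c(1-\delta_c)+Y\max_c\frac{v_c\delta_c^2}{2\alpha_c X}$. Intercepts and slopes decouple across contests, and the tiny-bid/big-bid mixture exploits exactly this.

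\emph{How your dual idea would need to change.} Player $\mcx$ should not play per-contest Lotto equilibria. It should choose marginals so that all curves $v_c\Pr(x_c<t)$ lie under one line $a+\mu t$: an atom $\min\{a/v_c,1\}$ at zero, then density $\mu/v_c$. This gives the valid bound $1-\min_{a\ge 0}\bigl\{a+\frac{Y}{2X}\sum_c((v_c-a)^+)^2/v_c\bigr\}$. In the example it equals $0.54$ (at $a=0.36$), matching the $\mcy$ strategy above. So the value there is $0.54$, while both of the paper's bounds read $0.55$.
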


Several remarks are in order. 
First, the max-min value inherently satisfies 
\begin{equation}
	\max_{F_\mcx\in\F(X)} \min_{F_\mcy \in \F^1(Y)}   \geq \pi_\mcx^\GL(X,Y), 
\end{equation}
that is, the best guaranteed payoff for $\mcx$ in $\RL^1(X,Y;\bs{v})$ is no worse than the equilibrium payoff in the classic $\GL$ game.
In forthcoming numerical studies, we illustrate that our lower bound in \eqref{eq:thm_LB} is highly indicative of the performance improvement (resp., degradation) of player $\mcx$ (resp., player $\mcy$)
due to the restricted capabilities of player $\mcy$. 

Second, statements \eqref{eq:thm_LB} and \eqref{eq:thm_UB} together with (the restricted version of) \eqref{eq:SV_inequality}, immediately yield
\begin{equation}
	\begin{aligned}
		\LB^* &\leq \max_{F_\mcx\in\F(X)} \min_{F_\mcy \in \F^1(Y)} \pi_\mcx(F_\mcx,F_\mcy) \\
		&\quad\quad\quad\quad \leq \min_{F_\mcy \in \F^1(Y)} \max_{F_\mcx\in\F(X)} \pi_\mcx(F_\mcx,F_\mcy) \leq \UB^*
	\end{aligned}
\end{equation}
where we denote $\LB^*$ and $\UB^*$ as the right-hand sides of~\eqref{eq:thm_LB} and \eqref{eq:thm_UB}, respectively.
Thus, in the event that our chracterized bounds coincide, i.e., $\LB^* = \UB^*$, we have found the \emph{equilibrium value} of $\RL^1(X,Y;\bs{v})$.

Third, we note that the optimization associated with the lower bound \eqref{eq:thm_LB} is a concave maximization problem 
that can be handled using standard convex solvers. 
Our characterization of the upper bound \eqref{eq:thm_UB} is an analytically explicit characterization.
We devote Section~\ref{sec:LBUB} and the Appendix  for the proof of Theorem~\ref{thm:LBUB}.

\subsection{Numerical validation of Theorem \ref{thm:LBUB}}

We have numerically plotted our characterized bounds from Theorem~\ref{thm:LBUB} in Figure~\ref{fig:thm_sims} over a range of player $\mcy$'s budget $Y$.
We observe that the lower bound, $\LB^*=\max_{\balph \in \Delta}  (1 - H(\balph))$, is never less than the equilibrium payoff from the classic $\GL$ game, $\pi_\mcx^\GL$ \eqref{eq:GL_equil}.
The lower bound also reflects a significant \emph{guaranteed} improvement of player $\mcx$'s performance in $\RL^1(X,Y;\bs{v})$ as compared to the classic $\GL$ game.

The plot also demonstrates that the computed lower and upper bounds in Theorem~\ref{thm:LBUB} are quite close for low values of $Y$, 
and actually coincide for sufficiently large values of $Y$.
This indicates that the bounds of Theorem~\ref{thm:LBUB} are actually tight in this regime,
and that our methodology is a viable approach for computing equilibria of $\RL^1(X,Y;\bs{v})$.

\begin{figure}
	\centering
	\includegraphics[width=0.48\textwidth]{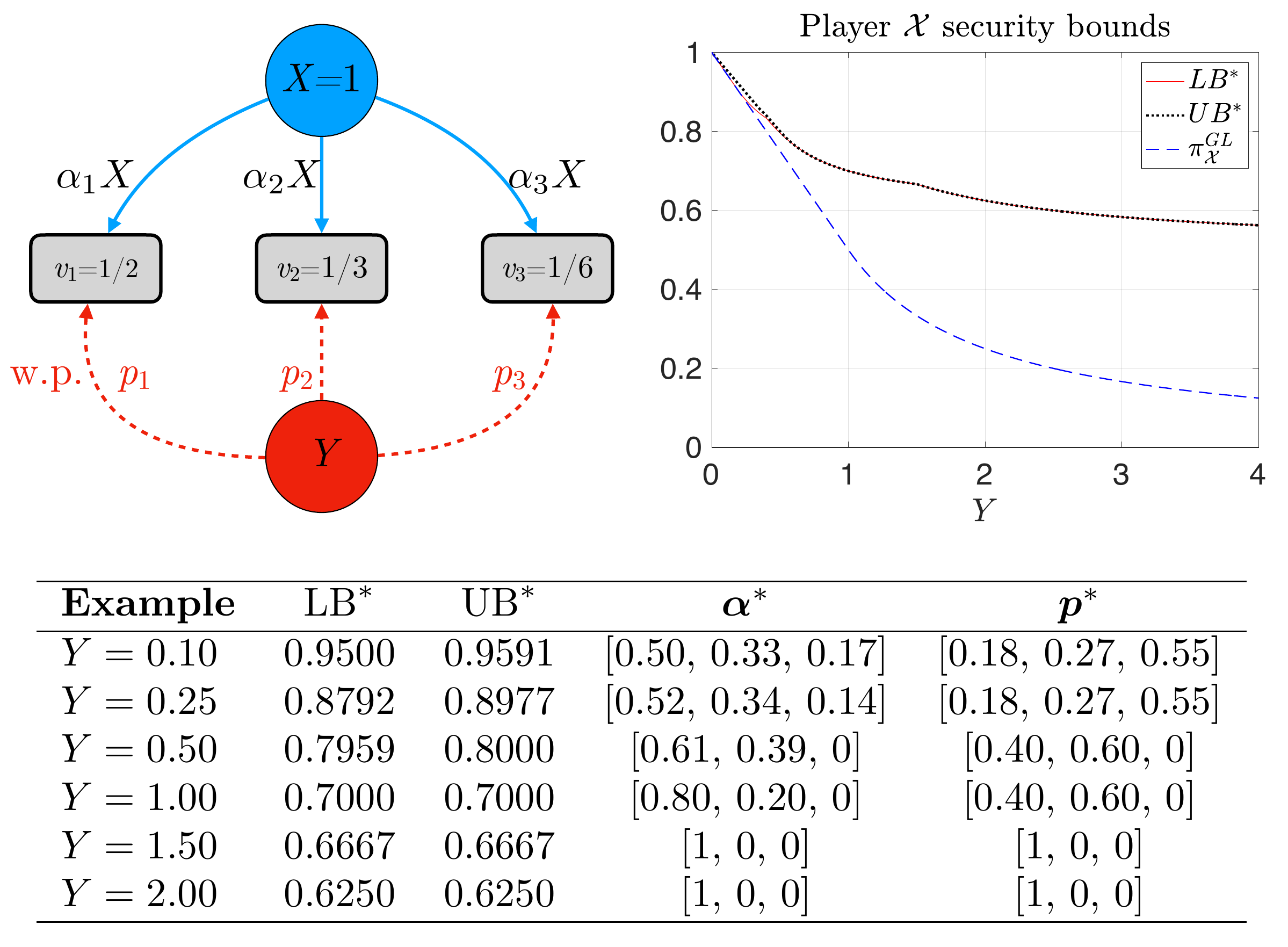}
	\caption{Example computations of lower and upper bounds on max-min and min-max values, respectively, in $\RL^1(X,Y;\bs{v})$ (Theorem~\ref{thm:LBUB}).
	We showcase a three-contest case study, with contest values shown in the top left diagram.
	We fix $X=1$ and vary $Y$.
	The top right plot shows traces for $\LB^*$ (right-hand-side of \eqref{eq:thm_LB}; solid red), $\UB^*$ (right-hand-side of \eqref{eq:thm_UB}; dotted black), and the equilibrium value from the classic $\GL$ game, $\pi_\mcx^\GL$ (dashed blue).
	We used standard convex solvers to calculate $\LB^*$, and directly applied formulas \eqref{eq:thm_UB} and \eqref{eq:GL_equil} to calculate $\UB^*$ and $\pi_\mcx^\GL$, respectively.
	We note a very small gap between $\LB^*$ and $\UB^*$ for values of $Y < 1$, and 
	they actually coincide for sufficiently high $Y \geq 1$, indicating equilibrium solutions.
	The optimizers $\balph^*$ to \eqref{eq:thm_LB} and $\bs{p}^*$ to \eqref{eq:thm_UB} for six examples are shown in the bottom table.
	We observe that player $\mcy$'s strategy $\bs{p}^*$ randomly selects among all three contests when it has a small budget, $Y \leq 0.25$.
	Interestingly, the contest selection probabilities are not proportional to the contest valuations, and it focuses its attack more on the least valuable contest, $v_3$.
	For higher budgets, player $\mcy$ abandons the lesser-valued contests in favor of the highest-value contest.}\label{fig:thm_sims}
\end{figure}

In the table of Figure~\ref{fig:thm_sims}, we refer to $\balph^*$ as the optimizer to the right-hand side of \eqref{eq:thm_LB}, and $\bs{p}^*$ as the vector among the $\bs{q}([k])$ that optimizes the right-hand side of \eqref{eq:thm_UB}.
Here, $\abf^*$ is the fraction of its total budget that player $\mcx$ devotes to contest $\bf$ (in expectation), and
$p^*_\bf$ is the probability that player $\mcy$ selects a contest $\bf$.
By \eqref{eq:thm_UB}, it selects only among the top $k^*$, where $k^*$ is determined from the game's parameters $X,Y,\bs{v}$.

\begin{figure}
	\centering
	\includegraphics[scale=0.35]{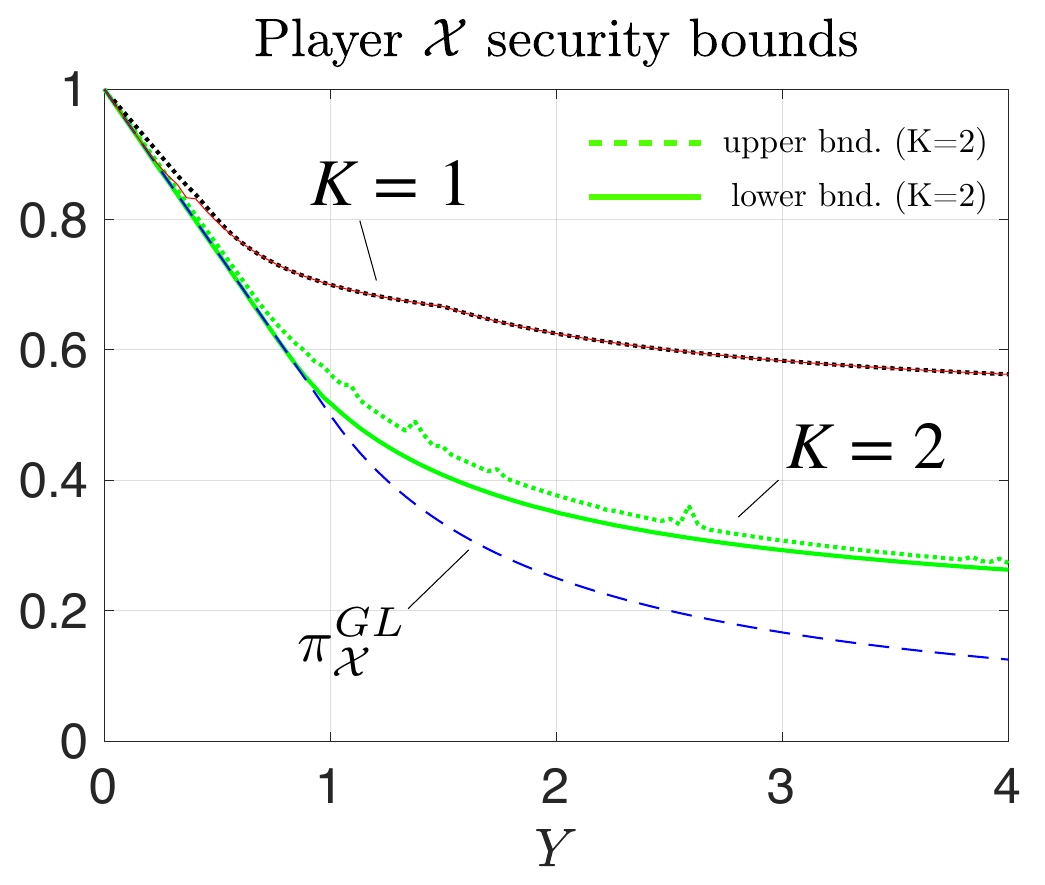}
	\caption{This plot shows lower and upper bounds on max-min and min-max values in the scenario where player $\mcy$ can allocate to $K=2$ contests.
	Here, we consider the same three-contest setup from Figure~\ref{fig:thm_sims}.
	The traces that correspond to $K=1$ and $\pi_\mcx^\GL$ are reproduced from Figure~\ref{fig:thm_sims}.
	The new green traces correspond to $K=2$: the solid green is a lower bound, and the dotted green is an upper bound.
	Both are obtained from approximate numerical solutions to the problems \eqref{eq:OPT-X-K} and \eqref{eq:OPT-Y-K}, respectively.
	We observe that the approximate lower bound (solid green) is never less than $\pi_\mcx^\GL$, the equilibrium payoff in the classic $\GL$ game where $\mcy$ is unrestricted (equivalently, the $K=3$ scenario).
	It also indicates that for larger values of $Y$, player $\mcx$ sees a significant improvement in its guaranteed performance when $K=2$ relative to $K=3$, but not as much as when $K=1$. 
	We also observe small gaps between the approximate $K=2$ lower and upper bounds.
	}\label{fig:K2}
\end{figure}

\subsection{Numerical study of generalized scenarios}\label{sec:K2}

In this subsection, we numerically explore generalized scenarios where player $\mcy$ is not as restricted, and is capable of allocating to exactly $K > 1$ contests simultaneously.
We refer these scenarios as $\RL^K(X,Y;\bs{v})$.
To do so, we formulated two optimization problems analogous to \eqref{eq:thm_LB} and \eqref{eq:thm_UB} that determine lower and upper bounds on the security values associated with $\RL^K(X,Y;\bs{v})$.
We omit their derivations due to space limitations. However, they follow similar steps to the proofs of \eqref{eq:thm_LB} and \eqref{eq:thm_UB} (detailed in Section \ref{sec:LBUB}).

The lower bound problem is stated as follows:
\begin{equation}\label{eq:OPT-X-K}
	\max_{\balph\in\Delta,\delta\in[0,1]} \left\{1 - \max_{\mcs \in \BFs^K} H_\mcs(\balph,\delta)\right\}
\end{equation}
where $\BFs^K$ is the collection of subsets of  size $K$, and we define for any $\mcs \in \BFs^K$,
\begin{equation}
	H_\mcs(\balph,\delta) := v_\mcs (1-\delta) + \delta^2 \frac{Y}{2X}\max_{\bf\in\mcs} \left\{ \frac{v_\bf}{\abf} \right\}
\end{equation}
and $v_\mcs = \sum_{\bf\in\mcs} v_\bf$.
We note that the objective function in \eqref{eq:OPT-X-K} serves as a lower bound for the max-min value of player $\mcx$ in $\RL^K(X,Y;\bs{v})$ for any choice of $(\balph,\delta)$.

The upper bound problem is stated as follows:
\begin{equation}\label{eq:OPT-Y-K}
	\min_{\bs{p},\bs{\beta}} \ \ 
	\begin{cases}
		1 - \frac{T_1^2(\bs{p})}{4T_2(\bs{p},\bs{\beta})}, &\text{if } T_1(\bs{p}) \leq 2T_2(\bs{p},\bs{\beta}) \\
		1 - T_1(\bs{p}) + T_2(\bs{p},\bs{\beta}), &\text{if } T_1(\bs{p}) > 2T_2(\bs{p},\bs{\beta})
	\end{cases}
\end{equation}
where we define
\begin{equation}
	\begin{aligned}
		T_1(\bs{p}) &:= \left(\sum_\overbfs v_\bf \sum_{\mcs\in\BFs^K : \bf\in\mcs} p_\mcs \right) \\
		T_2(\bs{p},\bs{\beta}) &:= \frac{X}{2Y} \max_\overbfs \left\{ v_\bf \sum_{\mcs\in\BFs^K : \bf\in\mcs} \frac{p_\mcs}{\beta_{\mcs,\bf}} \right\}.
	\end{aligned}
\end{equation}
The decision variables $\bs{p} = \{p_\mcs\}_{\mcs\in\BFs^K}$, where $\BFs^K$ are all subsets of contests of size $K$, are subject to constraints $p_\mcs \geq 0$ for all $\mcs\in\BFs^K$ and $\sum_{\mcs\in\BFs^K} p_\mcs = 1$. 
The decision variables $\bs{\beta} = \{\beta_{\mcs,\bf} \}_{\mcs\in\BFs^K, c\in\mcc}$ are subject to constraints
$\beta_{\mcs,\bf} \geq 0$ for all $\bf \in \mcs$, $\beta_{\mcs,\bf} = 0$ if $\bf\notin\mcs$, and $\sum_{\bf\in\mcs} \beta_{\mcs,\bf} = 1$.
We note that the objective function in \eqref{eq:OPT-Y-K} serves as an upper bound for the min-max value of player $\mcx$ in $\RL^K(X,Y;\bs{v})$ for any choice of $(\bs{p},\bs{\beta})$.

With the two optimization problems \eqref{eq:OPT-X-K} and \eqref{eq:OPT-Y-K} in hand, we apply off-the-shelf solvers (\texttt{fmincon} in MATLAB) to obtain approximate solutions.
It holds that approximate (i.e., sub-optimal) solutions to the \eqref{eq:OPT-X-K} and \eqref{eq:OPT-Y-K} still serve as lower and upper bounds to the max-min and min-max values in $\RL^K(X,Y;\bs{v})$, respectively.

In Figure~\ref{fig:K2}, we considered $K=2$ for the same three-contest setup from Figure~\ref{fig:thm_sims}.
Figure~\ref{fig:K2} reproduces the plot of Figure~\ref{fig:thm_sims}, along with new traces for lower and upper bounds of $\RL^2(X,Y;\bs{v})$.
We find that the computed lower bound for $K=2$ (solid green) is never less than $\pi_\mcx^\GL$, and
is also never greater than the lower bound $\LB^*$ associated with $K=1$.
Although they never coincide in this plot, we observe that the numerical lower and upper bounds for $K=2$ are quite close.


\section{Establishing the upper and lower bounds}\label{sec:LBUB}

In this section, we develop a proof of Theorem~\ref{thm:LBUB} by establishing the lower and upper bounds \eqref{eq:thm_LB} and \eqref{eq:thm_UB}.


\subsection{Establishing the lower bound}

In our analysis, we will consider strategies for player $\mcx$ of the form defined  below.
\begin{definition}\label{def:Xclass}
	Consider any game $\RL^1(X,Y;\bs{v})$.
	We define $\hat\F(X) \subset \F(X)$ as the set of CDFs $F_\mcx$ whose univariate marginal CDFs can be written in the following form:
	For $\overbfs$ and for any $u_\bf \geq 0$,
	\begin{equation}\label{eq:Xclass}
		F_{\mcx,\bf}(u_\bf) = 1 - \dxbf + \dxbf\min\left\{\frac{\dxbf}{2\abf X}u_\bf, 1 \right\}
	\end{equation}
	where the parameters satisfy $\dxbf \in [0,1]$, $\abf \geq 0$ with $\sum_\overbfs \abf = 1$.
	We denote $\bs{\delta} = [\delta_1,\ldots,\delta_\numbfs]^\top$ and $\balph = [\alpha_1,\ldots,\alpha_\numbfs]^\top$.
\end{definition}
For any strategy $F_\mcx \in \hat\F(X)$, its marginal distributions on allocations to each contest is a uniform distribution with a point mass at zero.
The weight of the point mass is $1-\dxbf$, which is interpreted as the probability that $\mcx$ sends zero resources to $\bf$.
The parameters $\abf$ are the fractions of $\mcx$'s total resources that it sends to each contest $\bf$ in expectation.
They control the length of the marginal uniform distributions such that $F_\mcx$ is admissible.
We re-emphasize that a strategy $F_\mcx \in \hat\F(X)$ is \emph{not} restricted: it can randomize over full allocations $\bs{x}$ for which $x_\bf > 0$ for all $\overbfs$. Any strategy $F_\mcx \in \hat\F(X)$ allocates exactly $X$ resources in expectation, and therefore $\hat\F(X) \subset \hat\F(X)$.

By considering strategies that belong to $\hat\F(X)$, we can pose a finite-dimensional optimization problem associated with deriving a lower bound for the max-min value.
The following lemma establishes the lower bound in \eqref{eq:thm_LB}.


\begin{lemma}\label{lem:LB}
	Consider any game $\RL^1(X,Y;\bs{v})$.
	Then
	\begin{equation}\label{eq:OPT-X}
		\max_{F_\mcx\in\F(X)} \min_{F_\mcy\in\F^1(Y)} \pi_\mcx(F_\mcx,F_\mcy) \geq \max_{\balph\in\Delta} \ (1 - H(\balph)).
	\end{equation}
	where $H(\cdot)$ is defined as in \eqref{eq:Hc}.
\end{lemma}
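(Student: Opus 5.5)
Fix $\balph\in\Delta$ and restrict $\mcx$ to strategies in $\hat\F(X)$ with this $\balph$ and a suitable $\bs{\delta}$. This gives
\[
\max_{F_\mcx\in\F(X)}\min_{F_\mcy\in\F^1(Y)}\pi_\mcx \;\ge\; \min_{F_\mcy\in\F^1(Y)}\pi_\mcx(F_\mcx,F_\mcy)
\]
for the chosen $F_\mcx$. Maximizing over $\balph$ then yields \eqref{eq:OPT-X}. So the plan is to compute $\mcy$'s best response to a fixed $F_\mcx\in\hat\F(X)$ and show that its payoff is at most $H(\balph)$. First we check admissibility of the marginals \eqref{eq:Xclass}. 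With probability $\dxbf$, the marginal is uniform on $[0,2\abf X/\dxbf]$, so its mean is $\abf X$. The total expected allocation is therefore $X$, and any coupling of the marginals (e.g.\ independent) gives an element of $\F(X)$.

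Next we use the key structural fact: a strategy $F_\mcy\in\F^1(Y)$ is a mixture over contests $\bf$, chosen with probability $p_\bf$, of univariate distributions $G_\bf$ with $\sum_\bf p_\bf\E_{G_\bf}[t]\le Y$. Since $\mcy$ puts zero on every other contest and ties go to $\mcx$, $\mcx$ wins all contests except possibly $\bf$. Hence $\mcy$'s payoff is
\[
\sum_\bf p_\bf v_\bf\,\E_{t\sim G_\bf}\big[1-F_{\mcx,\bf}(t)\big].
\]
For a point allocation $t$ we have $1-F_{\mcx,\bf}(t)=\dxbf\max\{1-\frac{\dxbf t}{2\abf X},0\}$. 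Because $\mcy$'s objective is linear in the pair (contest choice, budget usage), $\mcy$'s best response payoff is the Lagrangian/concave envelope bound:
\[
\sup_t\, v_\bf \dxbf\big(1-\tfrac{\dxbf t}{2\abf X}\big)^+ \text{ subject to an expected-budget constraint.}
\]
The function $g_\bf(t)=v_\bf\dxbf(1-\dxbf t/(2\abf X))^+$ is decreasing in $t$, and it jumps down at $0$. It is easiest to handle by choosing $\dxbf$ so that $\mcy$ cannot gain by spending.

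The main choice, and the step I expect to be delicate, is picking $\dxbf$ as a function of $\abf$ to reproduce $H_\bf$. If $\abf\ge Y/X$, take $\dxbf=1$, i.e.\ a uniform marginal on $[0,2\abf X]$. Then $\mcy$ wins $\bf$ with probability $\Pr[x_\bf<t]=\min\{t/(2\abf X),1\}$, a concave function of $t$. Spending at most $Y$ in expectation, even when all mass is concentrated on contest $\bf$, yields at most $v_\bf Y/(2\abf X)$, which is the second case of $H_\bf$. Mixing across contests gives a convex combination of per-contest values at budget levels $Y_\bf$ with $\sum p_\bf Y_\bf\le Y$. By concavity in $t$ and linearity, the total is bounded by $\max_\bf$ of the value attained when all of $Y$ is used on contest $\bf$. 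Indeed, $p_\bf v_\bf\min\{Y_\bf/(2\abf X),1\}$ with $p_\bf Y_\bf\le Y$ is at most $v_\bf\min\{Y/(2\abf X),p_\bf\}$. This step needs care, and the bound is obtained by combining the per-contest terms via the maximum. If $\abf<Y/X$, take $\dxbf=\abf X/Y<1$, so the marginal is $0$ w.p.\ $1-\dxbf$ and uniform on $[0,2Y]$ otherwise. Then $\mcy$ wins $\bf$ with probability $(1-\dxbf)\indicator\{t>0\}+\dxbf\min\{t/(2Y),1\}$. With budget $Y$ fully on $\bf$, using an arbitrarily small $t>0$ w.p.\ 1 plus the remaining spending, the value is at most
\[
v_\bf\big(1-\dxbf+\dxbf\tfrac{Y}{2Y}\big)=v_\bf\big(1-\tfrac{\abf X}{2Y}\big),
\]
the first case of $H_\bf$. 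Since $\mcy$ wins at most $1-\dxbf$ plus a concave term, the same linearity argument bounds mixtures.

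Finally, we combine. For every $F_\mcy\in\F^1(Y)$, $\pi_\mcy\le \max_\bf H_\bf(\abf)=H(\balph)$, where the bound comes from showing the per-contest achievable value $V_\bf(p_\bf,Y_\bf)$ satisfies $\sum_\bf V_\bf\le\max_\bf H_\bf$. I would do this by writing each $V_\bf$ as $p_\bf$ times a concave nondecreasing function of $Y_\bf$, with value at $Y_\bf=Y/p_\bf$ bounded by $H_\bf/p_\bf$. I would then argue by concavity (a perspective-function bound) that $\sum_\bf p_\bf\phi_\bf(Y_\bf)\le \max_\bf \phi_\bf(Y)$ whenever $\sum p_\bf Y_\bf\le Y$. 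This last inequality — that a mixing $\mcy$ cannot beat its best pure-contest attack — is the crux. If it fails in general, I would instead check it directly using the explicit piecewise-linear $\phi_\bf$, which are concave with $\phi_\bf(0^+)\ge0$. Then $\pi_\mcx\ge1-H(\balph)$, and maximizing over $\balph\in\Delta$ gives the lemma.
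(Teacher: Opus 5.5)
The gap is exactly the step you flagged as the crux, and it cannot be closed. The inequality $\sum_c p_c\phi_c(Y_c)\le\max_c\phi_c(Y)$ under $\sum_c p_cY_c\le Y$ fails in general once some contest has $\abf<Y/X$. For such a contest, $\phi_c$ jumps from $0$ to $v_c(1-\dxbf)$ at $0^+$. So $\mcy$ can collect $p_cv_c(1-\dxbf)$ with negligible budget and stack it on top of a concentrated attack on another contest. The perspective bound only gives $\sum_c\phi_c(p_cY_c)$ with $\sum_c p_cY_c\le Y$, and there the jumps of different contests add. (The paper's proof, which otherwise follows your outline, skips over this by asserting $\E[y_c\mid y_c>0]=Y$; the budget only gives $\sum_c P_c\,\E[y_c\mid y_c>0]\le Y$.)

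Concretely, take $C=2$, $\bv=(0.9,0.1)$, $X=1$, $Y=\tfrac12$. The right-hand side of the lemma is $0.775$, attained at $\balph=(1,0)$, i.e.\ $x_1\sim U[0,2]$ and $x_2\equiv0$. Suppose $\mcy$ puts $2-4\epsilon$ on contest 1 with probability $\tfrac14$ and $\epsilon$ on contest 2 with probability $\tfrac34$. It respects its budget and wins $0.3-O(\epsilon)$, so this $F_\mcx$ guarantees only about $0.7$.

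No other choice of $\bs{\delta}$, and no $F_\mcx$ outside $\hat\F(X)$, can rescue this, because the statement itself is false on this instance. Let $\mcy$ pick contest 1 with probability $\tfrac12$ and draw $y_1\sim U[0,1.8]$, or contest 2 with probability $\tfrac12$ and draw $y_2\sim U[0,0.2]$. This strategy lies in $\F^1(\tfrac12)$. For every $F_\mcx\in\F(1)$ it gives $\pi_\mcx\le0.9\bigl(\tfrac12+\tfrac{\E[x_1]}{3.6}\bigr)+0.1\bigl(\tfrac12+\tfrac{\E[x_2]}{0.4}\bigr)=\tfrac12+\tfrac14\E[x_1+x_2]\le0.75<0.775$.

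More generally, whenever $CY\le X$, $\mcy$ can realize the classic Lotto equilibrium marginals one contest at a time (contest $c$ with probability $Y/X$, allocation $U[0,2v_cX]$). So the restricted value is exactly $1-\frac{Y}{2X}$. What your linearization does prove is $\min_{F_\mcy\in\F^1(Y)}\pi_\mcx\ge1-\max_c v_c(1-\dxbf)-Y\max_c\frac{v_c\dxbf^2}{2\abf X}$, where the jump term and the slope term may come from different contests. Your argument does go through when every $\abf\ge Y/X$ (no jumps), but then it yields nothing beyond $1-\frac{Y}{2X}$.
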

\begin{proof}
	Consider any $(F_\mcx,F_\mcy) \in \hat\F(X)\times \F^1(Y)$. 
	The payoff to player $\mcx$ can be written
	\begin{equation}\label{eq:LB_intermediate1}
		\begin{aligned}
			&\pi_\mcx(F_\mcx,F_\mcy) = 1 - \pi_\mcy(F_\mcx,F_\mcy) \\ 
			&=1 - \E_{\ysim}\left[ \E_{\xsim} \left[ \sum_\overbfs v_\bf\cdot \indicator\{x_\bf < y_\bf \} \right]  \right].
		\end{aligned}
	\end{equation}
	Because ties favor player $\mcx$, the event $(x_\bf < y_\bf)$ can only happen when $\mcy$ has allocated a positive amount of resources to contest $\bf$.
	Let us denote $P_\bf := \E_{\bs{y}\sim F_\mcy}[\indicator\{ y_\bf > 0 \}] \geq 0$ as the probability that $y_\bf > 0$ under the distribution $F_\mcy \in \F^1(Y)$.
	Since $y_\bf > 0$ can only be true for a single $\bf$, it holds that $\sum_\overbfs P_\bf = 1$. 
	Conditioned on the event $y_\bf > 0$, we write the conditional $\bf$-marginal distribution as $F_{\mcy|\bf}$.
	Continuing from the previous calculation, we obtain
	\begin{equation}\label{eq:LB_intermediate2}
		\begin{aligned}
			&\pi_\mcx(F_\mcx,F_\mcy) = 1 -  \sum_\overbfs v_\bf P_\bf \E_{y_c\sim F_{\mcy|\bf}}[F_{\mcx,c}(y_c)] \\
			&= 1 - \sum_\overbfs v_\bf P_\bf \E_{y_c\sim F_{\mcy|\bf}}[1-\dxbf + \dxbf\min\left\{ \frac{\dxbf}{2\abf X}y_\bf, 1 \right\}] \\
			&\geq 1 -  \sum_\overbfs v_\bf P_\bf \left[1-\dxbf + \frac{\dxbf^2}{2\abf X}\E_{y_c\sim F_{\mcy|\bf}}[y_\bf] \right] \\
			&= 1 - \sum_\overbfs v_\bf P_\bf \left[1-\dxbf + \frac{\dxbf^2}{2\abf X}Y \right].
		\end{aligned}
	\end{equation}
	The inequality is due to $\min\{a,b\} \leq a$ for any numbers $a,b$.
	The last equality follows because the expectation of $y_\bf$, conditioned on $y_\bf > 0$, must be $Y$ due to the budget constraint \eqref{eq:lotto_restricted_constraint}.
	Now, we have
	\begin{equation}\label{eq:LB_intermediate3}
		\begin{aligned}
			&\min_{F_\mcy\in\F^1(Y)} \pi_\mcx(F_\mcx,F_\mcy) \\
			&\quad\quad\geq 1 - \max_\overbfs\left\{v_\bf\left( 1-\dxbf + \frac{\dxbf^2}{2\abf X}Y \right) \right\} \\
			&\quad\quad=: T_\LB(\bs{\delta},\bs{\alpha};\bv).
		\end{aligned}
	\end{equation}
	The inequality results from minimizing the last expression of \eqref{eq:LB_intermediate2} over $F_\mcy \in \F^1(Y)$, or equivalently, over $\bs{P} \in \Delta$.
	We observe that $T_\LB(\bs{\delta},\bs{\alpha};\bv)$ can be analytically maximized over $\bs{\delta} \in [0,1]^\numbfs$.
	The optimal choices are
	\begin{equation}
		\dxbf^* := \min\left\{\frac{\abf X}{Y}, 1 \right\}, \ \overbfs.
	\end{equation}
	It holds that $T_\LB(\bs{\delta}^*,\bs{\alpha};\bv) = 1 - H(\balph)$, and we obtain the result.
\end{proof}

\subsection{Establishing the upper bound}

Here, we will consider strategies for player $\mcy$ of the form defined below.

\begin{definition}\label{def:Y1class}
	Consider any game $\RL^1(X,Y;\bs{v})$. 
	We define $\hat\F^1(Y) \subset \F^1(Y)$ as the set of CDFs $F_\mcy$ which can be written in the following form:
	For any $\bs{u} \in \Rp^\numbfs$,
	\begin{equation}
		F_\mcy(\bs{u}) = 1-\dy + \dy \sum_{\overbfs} p_\bf \min\left\{\frac{\dy}{2Y}u_\bf, 1 \right\}
	\end{equation}
	where the parameters satisfy $\dy \in [0,1]$, $p_\bf \geq 0$ for all $\overbfs$, and $\sum_{\overbfs} p_\bf = 1$.
\end{definition}
A strategy $F_\mcy \in \hat\F^1(Y)$ is interpreted as follows. 
With independent probability $1-\dy$, player $\mcy$'s resource allocation is zero on all contests.
With probability $\dy$, player $\mcy$ selects a single contest $\bf$ according to the probability vector $\bs{p} := [p_1,\ldots,p_\numbfs]^\top$.
The amount of resources allocated to contest $\bf$ is $y_\bf = \frac{2Y}{\dy}\cdot U_c$, where $U_c \sim\text{Unif}[0,1]$ is a uniform random variable.
The amount allocated to any other contest is zero, i.e., $y_k = 0$ for all $k\neq \bf$. Any strategy $F_\mcy \in \hat\F^1(Y)$ allocates exactly $Y$ resources in expectation, and therefore $\hat\F^1(Y) \subset \hat\F(Y)$.

By considering strategies that belong to $\hat\F^1(Y)$, we can pose a finite-dimensional optimization problem associated with deriving an upper bound for the min-max value.
The following lemma establishes the upper bound in \eqref{eq:thm_UB}.

\begin{lemma}\label{lem:UB_opt}
	Consider any game $\RL^1(X,Y;\bs{v})$. 
	Then
	\begin{equation}
		\min_{F_\mcy \in \F^1(Y)} \max_{F_\mcx \in \F(X)} \pi_\mcx(F_\mcx,F_\mcy) \leq \min_{\bs{p}\in\Delta} \ \text{UB}(\bs{p})
	\end{equation}
	where $\text{UB}(\bs{p})$ is defined as in \eqref{eq:UB_obj}.
\end{lemma}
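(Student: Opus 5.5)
The plan mirrors the proof of Lemma~\ref{lem:LB}, with the roles of the players swapped. Since $\hat\F^1(Y) \subset \F^1(Y)$, we have
\[
\min_{F_\mcy\in\F^1(Y)} \max_{F_\mcx\in\F(X)} \pi_\mcx(F_\mcx,F_\mcy) \leq \inf_{F_\mcy\in\hat\F^1(Y)} \max_{F_\mcx\in\F(X)} \pi_\mcx(F_\mcx,F_\mcy).
\]
So it suffices to fix an arbitrary $F_\mcy\in\hat\F^1(Y)$ with parameters $(\dy,\bs{p})$ and bound player $\mcx$'s best-response payoff by an expression in $(\dy,\bs{p})$. We then minimize that expression over $\dy\in[0,1]$ for each fixed $\bs{p}$, and finally over $\bs{p}\in\Delta$.

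\textbf{Step 1 (best-response bound).} Under $F_\mcy$, for each contest $\bf$ the allocation $y_\bf$ is zero with probability $1-\dy p_\bf$. Otherwise it is uniform on $[0,2Y/\dy]$. Because ties go to $\mcx$, a pure allocation $\bs{x}$ wins contest $\bf$ with probability
\[
1-\dy p_\bf + \dy p_\bf \min\left\{\tfrac{\dy}{2Y}x_\bf,1\right\}.
\]
Writing $S := \bv^\top\bs{p}$ and taking expectations over any $F_\mcx\in\F(X)$, we get
\[
\pi_\mcx(F_\mcx,F_\mcy) = 1 - \dy S + \sum_{\overbfs} v_\bf p_\bf \dy\, \E\!\left[\min\left\{\tfrac{\dy}{2Y}x_\bf,1\right\}\right].
\]
Using $\min\{a,1\}\le a$ and the budget constraint $\sum_\bf \E[x_\bf]\le X$, the last sum is at most
\[
\frac{\dy^2}{2Y}\sum_\bf v_\bf p_\bf \E[x_\bf] \le \frac{\dy^2 X}{2Y}\, m, \qquad m:=\max_{\overbfs} v_\bf p_\bf.
\]
Hence
\[
\max_{F_\mcx\in\F(X)}\pi_\mcx(F_\mcx,F_\mcy) \le G(\dy) := 1-\dy S + \dy^2\frac{X m}{2Y}.
\]
This step is essentially the analogue of \eqref{eq:LB_intermediate2}. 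The only point needing care is that the bound holds uniformly over all (possibly correlated) $F_\mcx$. This follows because the payoff depends on $F_\mcx$ only through its marginals, by linearity of expectation.

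\textbf{Step 2 (optimizing $\dy$).} The function $G$ is a convex quadratic in $\dy$ with unconstrained minimizer $\dy^\circ = \frac{Y S}{X m}$. There are two cases.
\begin{itemize}
\item If $\frac{Y}{X}S - m \le 0$, then $\dy^\circ\in[0,1]$ is feasible and gives $G(\dy^\circ) = 1-\frac{Y}{2X}\frac{S^2}{m}$.
\item Otherwise the constrained minimum over $[0,1]$ is at $\dy=1$, giving $G(1) = 1 - S + \frac{X}{2Y}m$.
\end{itemize}
These two expressions are exactly the two branches of \eqref{eq:UB_obj}, so $\inf_{\dy} G(\dy) = \UB(\bs{p})$. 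Taking the minimum over $\bs{p}\in\Delta$ yields the lemma.

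\textbf{Main obstacle.} I expect the only delicate point to be the edge behavior of the parametrization. If $\dy$ approaches $0$, then $2Y/\dy$ blows up, so the choice $\dy^\circ$ must be checked to lie in $(0,1]$; this holds because $S>0$ and $m>0$. It should also be noted that $\UB$ attains its minimum on the compact set $\Delta$, or else the statement should be read with an infimum. Both points are routine. The substantive content is the linear relaxation in Step~1, which is tight whenever $\mcx$ never needs to allocate more than $2Y/\dy$ to a single contest.
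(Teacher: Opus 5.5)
Your proposal is correct and follows the paper's proof essentially step for step. You restrict $\mcy$ to $\hat\F^1(Y)$, bound $\mcx$'s best response via $\min\{a,1\}\le a$ and the budget constraint to obtain $1-\dy\,\bv^\top\bs{p}+\dy^2\frac{X}{2Y}\max_{\overbfs} v_\bf p_\bf$, and minimize over $\dy\in[0,1]$ to recover the two branches of $\UB(\bs{p})$. Your linear term $-\dy\,\bv^\top\bs{p}$ is the correct one; the paper's intermediate displays write $1-\dy$, a typo inconsistent with its own $\dy^*$.
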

\begin{proof}
	Consider any $(F_\mcx,F_\mcy) \in \F(X)\times \hat\F^1(Y)$.
	The payoff to player $\mcx$ can be written as 
	\begin{equation}\label{eq:UB_intermediate1}
		\begin{aligned}
			&\pi_\mcx(F_\mcx,F_\mcy) =\E_{\xsim}\left[ \E_{\ysim} \left[ \sum_c v_c\cdot \indicator\{y_c \leq x_c \} \right]  \right] \\
			&= \sum_\overbfs v_c\cdot \E_{\xsim}\left[ F_{\mcy,\bf}(\bs{x})  \right] \\
			&= \sum_\overbfs v_\bf\cdot \E_{\xsim}\left[ (1 - p_\bf\dy) + p_\bf\dy \min\left\{\frac{\dy}{2Y}x_\bf, 1 \right\}  \right] \\
			&\leq \sum_\overbfs v_c\cdot \E_{\xsim}\left[ (1 - p_\bf\dy) + p_\bf\dy \frac{\dy}{2Y}x_\bf  \right] \\
			&= 1 - \dy + \frac{\dy^2}{2Y}\sum_\overbfs v_\bf p_\bf X_\bf \\
		\end{aligned}
	\end{equation}
	where we write $X_\bf := \E_\xsim[x_\bf]$.
	The second and third equalities above follow from writing the univariate marginal distributions $F_{\mcy,\bf}(u_\bf) := \lim_{u_i \rightarrow\infty, \forall i \neq \bf} F_\mcy(\bs{u})$, from Definition \ref{def:Y1class}.
	The inequality is due to the fact that $\min\{a,b\} \leq a$ for any numbers $a,b$.
	We then have
	\begin{equation}\label{eq:UB_intermediate2}
		\begin{aligned}
			&\max_{F_\mcx\in\F(X)} \pi_\mcx(F_\mcx,F_\mcy) \\
			&\quad\quad\leq 1 - \dy + \dy^2\frac{X}{2Y} \max_\overbfs \{v_\bf p_\bf\} \\
			&\quad\quad=: T(\dy,\bs{p}).
		\end{aligned}
	\end{equation}
	The inequality results from observing that any strategy $F_\mcx \in\F(X)$ that maximizes 
	the last expression of \eqref{eq:UB_intermediate1} is one that places all resources $X$ (in expectation) on a contest that has the maximal value among $v_\bf p_\bf$.
	
	By Definition \ref{def:Y1class}, any $F_\mcy \in \hat\F^1_\mcy(Y)$ is associated with a pair of parameters $(\dy,\bs{p})$.
	Notice that $T(\dy,\bs{p})$ can be analytically minimized over $\dy\in[0,1]$,
	\begin{equation}
		\dy^* := \min\left\{ \frac{\bv^\top \bs{p}}{\max_\overbfs v_\bf p_\bf} \frac{Y}{X}, 1 \right\}.
	\end{equation}
	We then obtain $T(\dy^*,\bs{p}) = \text{UB}(\bs{p})$ as defined in \eqref{eq:UB_obj}, and we obtain the result.
\end{proof}

The value $\min_{\bs{p}\in\Delta} \ \text{UB}(\bs{p})$ serves as an upper bound, but it is associated with a non-convex minimization problem. 
Nevertheless, it is possible to provide an explicit characterization of this value.

\begin{lemma}\label{lem:UB_analytical}
	Consider any game $\RL^1(X,Y;\bs{v})$.
	Then
	\begin{equation}
		\min_{\bs{p}\in\Delta} \ \text{UB}(\bs{p}) = \min_{k=1,2,\ldots,\numbfs} \UB(\bs{q}([k]))
	\end{equation}
	where $\bs{q}([k])$ is defined as in \eqref{eq:UB_q_explicit}, and $[k]\subseteq \BFs$ are the top $k$ contests in value.
\end{lemma}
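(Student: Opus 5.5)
The plan is to reduce the nonconvex problem $\min_{\bs{p}\in\Delta}\UB(\bs{p})$ to a one-dimensional problem, and then use a concavity argument to show that the minimum is attained at one of the points $\bs{q}([k])$.

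\textbf{Step 1: $\UB$ as a function of two scalars.} By the proof of Lemma~\ref{lem:UB_opt}, $\UB(\bs{p}) = \min_{\dy\in[0,1]} T(\dy,\bs{p})$, where
\[
T(\dy,\bs{p}) = 1 - \dy A + \dy^2\tfrac{X}{2Y}M, \qquad A := \bv^\top\bs{p}, \quad M := \max_\overbfs v_\bf p_\bf .
\]
Write $\UB(\bs{p}) = g(A,M)$ with $g(A,M) := \min_{\dy\in[0,1]}\{1-\dy A + \dy^2\tfrac{X}{2Y}M\}$. For each fixed $\dy$ the bracket is affine in $(A,M)$, nonincreasing in $A$ and nondecreasing in $M$. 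Taking the minimum over $\dy$ preserves these properties, so:
\begin{itemize}
\item $g$ is jointly concave in $(A,M)$;
\item $g$ is nonincreasing in $A$;
\item $g$ is nondecreasing in $M$.
\end{itemize}

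\textbf{Step 2: for fixed $M$, maximize $A$.} By monotonicity, for a fixed level $M$ it suffices to maximize $A$ over $\bs{p}\in\Delta$ subject to $v_\bf p_\bf \le M$ for all $\bf$. Substitute $w_\bf = v_\bf p_\bf$. The problem becomes
\[
\max \sum_\bf w_\bf \quad \text{s.t.} \quad 0\le w_\bf\le M, \quad \sum_\bf w_\bf/v_\bf = 1 .
\]
This is a fractional knapsack problem, so the greedy solution is optimal. It sets $w_\bf = M$ on the contests with the largest $v_\bf$ (cheapest cost $1/v_\bf$) and uses a fractional amount on the next contest. With $S_k := \sum_{\bf\le k}1/v_\bf$ and $M\in[1/S_{k+1},1/S_k]$, the maximizer is
\[
p_\bf = M/v_\bf \ (\bf\le k), \qquad p_{k+1} = 1-MS_k, \qquad \text{other } p_\bf = 0 .
\]
The feasible range of $M$ is $[1/S_\numbfs, v_1]$. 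Since every $\bs{p}$ is dominated in this sense, $\min_{\bs{p}\in\Delta}\UB(\bs{p})$ equals the minimum of $g$ along this piecewise-defined curve $\bs{p}(M)$. At the breakpoints $M = 1/S_k$ the curve passes through $p_\bf = (1/v_\bf)/S_k$ for $\bf\le k$. After clearing denominators, this is exactly $\bs{q}([k])$ from \eqref{eq:UB_q_explicit}. The max in the definition of $M$ is indeed attained at value $M$, because $v_{k+1}p_{k+1}\le M$ on the segment by construction.

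\textbf{Step 3: concavity along segments.} On each segment $M\in[1/S_{k+1},1/S_k]$ we have $A(M) = kM + v_{k+1}(1-MS_k)$, which is affine in $M$. The pair $(A(M),M)$ therefore moves along a line, and by joint concavity $M\mapsto g(A(M),M)$ is concave on the segment. Its minimum is attained at an endpoint, i.e.\ at $\bs{q}([k])$ or $\bs{q}([k+1])$. Taking the minimum over all segments gives $\min_{\bs{p}}\UB(\bs{p}) = \min_k \UB(\bs{q}([k]))$. The reverse inequality is trivial since each $\bs{q}([k])\in\Delta$.

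\textbf{Main obstacle.} I expect the main care to be needed in Step 2. One must check that restricting to the greedy curve loses nothing, using that $g$ is monotone in both arguments rather than only in $A$: an arbitrary $\bs{p}$ with value $M$ is dominated by the greedy point with the same $M$, which has $A$ at least as large. One must also confirm the identification of the breakpoints with the explicit product formula $\bs{q}([k])$.
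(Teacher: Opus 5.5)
Your proof is correct, and it follows a genuinely different route from the paper's.

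\textbf{The paper's route.} It partitions $\Delta$ into the regions $Q_{\{\bf\}}$ on which contest $\bf$ attains $\max_i v_i p_i$. It then:
\begin{itemize}
\item shows each region is a polytope whose vertices are the points $\bs{q}(\mct)$ with $\mct\ni\bf$ (Lemma~\ref{lem:UB_conv_hull});
\item proves $\UB$ is concave on each region by an explicit second-derivative computation (Lemma~\ref{lem:UB_concave});
\item deduces that some $\bs{q}(\mcs)$, among all $2^\numbfs$ subsets, is optimal;
\item uses a separate exchange argument (Lemma~\ref{lem:UB_topk}, with the closed forms of Lemma~\ref{lem:UB_obj_equalized}) to show that $[k]$ beats every other $k$-subset.
\end{itemize}

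\textbf{Your route.} You write $\UB$ as $g(A,M)$, an infimum of affine functions, which gives joint concavity and both monotonicities for free. Your fractional-knapsack step then collapses $\Delta$ onto a one-parameter piecewise-affine curve whose breakpoints are exactly the $\bs{q}([k])$, since $q_\bf([k]) = (1/v_\bf)/S_k$. The top-$k$ restriction therefore falls out of the greedy order rather than needing its own lemma. You also learn that each segment is bounded by the adjacent pair $\bs{q}([k]),\bs{q}([k+1])$.

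Your representation also gives a cleaner proof of the paper's Lemma~\ref{lem:UB_concave}, because $M = v_\bf p_\bf$ is linear on $Q_{\{\bf\}}$. This sidesteps the paper's terse claim that a continuous function which is concave on one piece and affine on the other must be concave.

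\textbf{What the paper's route buys.} It gives more structural information: a polyhedral description of the regions $Q_\mcs$, and closed-form values of $\UB(\bs{q}(\mcs))$ for arbitrary $\mcs$. That may carry over more directly to the $K>1$ variant.

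\textbf{One small point.} Justify $\UB(\bs{p}) = \min_{\dy\in[0,1]}\{1-\dy A+\dy^2\frac{X}{2Y}M\}$ directly from \eqref{eq:UB_obj}. The unconstrained minimizer of this convex quadratic is $\dy = AY/(MX)$, clipped at $1$, and this reproduces both cases. Do not rely on the displayed $T(\dy,\bs{p})$ in the proof of Lemma~\ref{lem:UB_opt}, because it omits the factor $\bv^\top\bs{p}$ in the linear term.
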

\begin{proof}
	The detailed proof is given in the Appendix.
\end{proof}

With Lemmas \ref{lem:LB}, \ref{lem:UB_opt}, and \ref{lem:UB_analytical}, we attain a proof of Theorem \ref{thm:LBUB}.

\section{Conclusion and Future Work}

In this paper, we proposed a novel formulation of the General Lotto game where one of the players is restricted to allocating resources to only a single contest, and the other player is not restricted.
Our study seeks to understand how this asymmetry in players' capabilities impacts their strategic decision-making.
The results establish lower and upper bounds on the players' security values, i.e., their guaranteed performance. 
We find that our bounds are quite close, and even coincide in many instances, which indicates equilibrium solutions of our formulation.
These resulting performance metrics also reflect a significant impact as compared to the classic Lotto game, where both players are unrestricted.

This study can be extended by establishing conditions for when the lower and upper bounds coincide, allowing claims for equilibrium solutions.
Further study will also more deeply investigate generalized scenarios where the restricted player can allocate to exactly $K > 1$ contests.

\bibliographystyle{IEEEtran}
\bibliography{sources}

\appendix

\subsection{Proof of Lemma~\ref{lem:UB_analytical}}

Here, we provide a detailed proof for Lemma~\ref{lem:UB_analytical}, 
the analytical characterization \eqref{eq:thm_UB} of the upper bound in Theorem~\ref{thm:LBUB}.
Recall that $\UB(\bs{p})$ \eqref{eq:UB_obj} is defined as
\begin{equation}
	\small
	\begin{cases}
		1 - \frac{Y}{2X}\frac{(\bv^\top \bs{p})^2}{\max_\overbfs v_\bf p_\bf}, &\text{if } \frac{Y}{X}\bv^\top \bs{p} - \max_\overbfs v_\bf p_\bf \leq 0 \\
		1 - \bv^\top \bs{p} + \frac{X}{2Y} \max_\overbfs v_\bf p_\bf, &\text{if } \frac{Y}{X}\bv^\top \bs{p} - \max_\overbfs v_\bf p_\bf > 0\\
	\end{cases}
\end{equation}
Let us denote the optimization problem \eqref{eq:OPT-Y} as
\begin{equation}\label{eq:OPT-Y}
	\min_{\bs{p}\in\Delta} \UB(\bs{p}). \tag{OPT-Y}
\end{equation}


\begin{definition}
	For any subset of contests $\mcs \subseteq \BFs$, define
	\begin{equation}\label{eq:Qs}
		Q_\mcs := \left\{\bs{p} \in \Delta : i \in \arg\max_\overbfs v_\bf p_\bf, \ \forall i\in\mcs \right\}
	\end{equation}
	We also define 
	\begin{equation}
		\Delta_\mcs := \left\{ \bs{p}\in\Delta : p_i = 0 \ \forall i \notin\mcs \right\}.
	\end{equation}
\end{definition}
The set $Q_\mcs$ is the set of probability vectors for which $v_i p_i = v_j p_j$ for all contests $i,j\in\mcs$, 
and this value is maximal among all contests $\overbfs$. 
The set $\Delta_\mcs$ is the set of all probability vectors whose support is constrainted to the contests $\mcs$.

The above Definition sheds light on the points $\bs{q}(\mcs)$ as defined in \eqref{eq:UB_q_explicit}.
Indeed, $\bs{q}(\mcs)$ is the unique probability vector with support constrained to $\mcs$, that equalizes the values $v_\bf p_\bf$ for all $\bf\in\mcs$.
It follows that we can alternately write $\bs{q}(\mcs) = Q_\mcs \cap \Delta_\mcs$.


\begin{lemma}\label{lem:UB_conv_hull}
	Let $\mcs\subseteq\BFs$ be a subset of contests. Then
	\begin{equation}
		Q_\mcs = \text{conv}\left\{ \bs{q}(\mcal{T}) \right\}_{\mct\supseteq \mcs},
	\end{equation}
	where $\text{conv}(\cdot)$ indicates the convex hull. 
\end{lemma}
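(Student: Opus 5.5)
We prove the two inclusions separately. Throughout, write $m(\bs{p}) := \max_{\overbfs} v_\bf p_\bf$. The set $Q_\mcs$ is a polytope, cut out from $\Delta$ by linear equalities and inequalities:
\begin{equation*}
v_i p_i = v_j p_j \ \ (i,j\in\mcs), \qquad v_k p_k \le v_i p_i \ \ (i\in\mcs,\ k\notin\mcs).
\end{equation*}
Hence $Q_\mcs$ is convex, and it suffices to identify its extreme points.

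\textbf{Inclusion $\supseteq$.} Take any $\mct\supseteq\mcs$. By construction, $\bs{q}(\mct)$ satisfies $v_\bf q_\bf(\mct) = \lambda_\mct > 0$ for all $\bf\in\mct$ and $q_\bf(\mct)=0$ otherwise. So every $i\in\mcs\subseteq\mct$ attains the maximum of $v_\bf q_\bf$, which gives $\bs{q}(\mct)\in Q_\mcs$. Convexity of $Q_\mcs$ then gives
\begin{equation*}
\text{conv}\{\bs{q}(\mct)\}_{\mct\supseteq\mcs} \subseteq Q_\mcs.
\end{equation*}

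\textbf{Inclusion $\subseteq$: decomposition.} Fix $\bs{p}\in Q_\mcs$. We decompose it explicitly by a ``water-level'' (layer-cake) construction, which avoids enumerating vertices. Sort the numbers $w_\bf := v_\bf p_\bf$ as
\begin{equation*}
m(\bs{p}) = w_{(1)} \ge w_{(2)} \ge \cdots \ge w_{(\numbfs)} \ge 0,
\end{equation*}
where the indices in $\mcs$ occupy the top positions, all tied at level $m(\bs{p})$. Define the nested level sets $\mct_j := \{\bf : w_\bf \ge w_{(j)}\}$ for $j \le |\text{supp}(\bs{p})|$. Each $\mct_j$ contains $\mcs$, because elements of $\mcs$ attain the maximum. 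Now write
\begin{equation*}
w_\bf = \sum_j \bigl(w_{(j)} - w_{(j+1)}\bigr)\,\indicator\{\bf\in\mct_j\},
\end{equation*}
with the convention $w_{(n+1)}=0$ at the last support index $n$. Dividing by $v_\bf$ and using $\indicator\{\bf\in\mct_j\}/v_\bf = q_\bf(\mct_j)\,/\,\lambda_{\mct_j}$, where $\lambda_{\mct} = 1/\sum_{i\in\mct} v_i^{-1}$, we obtain
\begin{equation*}
\bs{p} = \sum_j \theta_j\, \bs{q}(\mct_j), \qquad \theta_j := \frac{w_{(j)} - w_{(j+1)}}{\lambda_{\mct_j}} \ge 0.
\end{equation*}
Summing the coordinates, and using that both $\bs{p}$ and each $\bs{q}(\mct_j)$ lie in $\Delta$, gives $\sum_j \theta_j = 1$. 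This is a convex combination of points $\bs{q}(\mct)$ with $\mct\supseteq\mcs$, completing the inclusion.

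The main point of care is the bookkeeping in this decomposition. The layers must be taken only over the support of $\bs{p}$, so that each $\mct_j$ is nonempty and $\bs{q}(\mct_j)$ is well defined. Repeated values of $w_\bf$ produce zero weights $\theta_j$, which is harmless. We also need to check that $\bs{q}(\mct)$ as given in \eqref{eq:UB_q_explicit} has $q_\bf(\mct) = \lambda_\mct / v_\bf$: multiplying the numerator and denominator of \eqref{eq:UB_q_explicit} by $1/\prod_{i\in\mct} v_i$ gives
\begin{equation*}
q_\bf(\mct) = \frac{1/v_\bf}{\sum_{j\in\mct} 1/v_j}.
\end{equation*}
The degenerate case $\mcs=\emptyset$, if allowed, is covered by the same argument, with $Q_\emptyset = \Delta$.
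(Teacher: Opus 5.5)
Your proof is correct, and for the inclusion $Q_\mcs\subseteq\text{conv}\{\bs{q}(\mct)\}_{\mct\supseteq\mcs}$ it takes a different route from the paper.

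\textbf{The paper's route.} The paper argues by contradiction through the vertices of the polytope $Q_\mcs$. If the inclusion failed, some vertex $\bs{r}$ would not be among the $\bs{q}(\mct)$. The paper then identifies a vertex with support $\mct$ as ``the'' point $Q_\mcs\cap\Delta_\mct$, and hence as $\bs{q}(\mct)$. That identification is loose: $Q_\mcs\cap\Delta_\mct$ is generally not a single point. What is really needed is that at a vertex every support coordinate attains the maximum of $v_c r_c$. Otherwise you can move mass in both directions between a sub-maximal support coordinate and the maximizing block while staying in $Q_\mcs$. The paper asserts this rather than proving it.

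\textbf{Your route and what it buys.} Your layer-cake decomposition along the upper level sets of $w_c=v_cp_c$ bypasses vertex theory entirely. It is constructive and gives explicit weights $\theta_j$. It also proves slightly more: each $\bs{p}\in Q_\mcs$ is a convex combination of $\bs{q}(\mct)$ along a single nested chain $\mcs\subseteq\mct_1\subseteq\cdots\subseteq\text{supp}(\bs{p})$, which amounts to a chain triangulation of $Q_\mcs$.

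Your $\supseteq$ direction is also tighter than the paper's. The paper only checks that convex combinations preserve the equalities $v_ip_i=v_jp_j$ on $\mcs$, leaving maximality over $c\notin\mcs$ implicit. You get maximality directly from $\bs{q}(\mct)\in Q_\mcs$ together with convexity of $Q_\mcs$.

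\textbf{What the paper's route would buy.} If completed, it would identify the vertex set of $Q_\mcs$, which is the language Lemma~\ref{lem:UB_polyhedral} uses. Your convex-hull statement already suffices there, however, since a concave function on the convex hull of finitely many points attains its minimum at one of those points.
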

This lemma states that the set of probability vectors $Q_\mcs$ is the convex polytope
whose vertices are $\bs{q}(\mct)$, where $\mct$ ranges over all supersets of $\mcs$.
\begin{proof}
	We first show that $\text{conv}\left\{ \bs{q}(\mcal{T}) \right\}_{\mct\supseteq \mcs} \subseteq Q_\mcs$.
	Suppose $\bs{p} \in \text{conv}\left\{ \bs{q}(\mcal{T}) \right\}_{\mct\supseteq \mcs}$.
	Then there exists convex weights $\{w_\mct\}_{\mct\supseteq \mcs}$ such that 
	\begin{equation}
		\bs{p} = \sum_{\mct\supseteq \mcs} w_\mct \bs{q}(\mcal{T}).
	\end{equation}
	Every vertex $\bs{q}(\mct)$ has the property that $v_i q_i(\mct) = v_j q_j(\mct)$ for all $i,j \in \mcs$.
	Then,
	\begin{equation}
		\begin{aligned}
			v_i p_i = \sum_{\mct\supseteq \mcs} w_\mct v_i q_i(\mct) = \sum_{\mct\supseteq \mcs} w_\mct v_j q_j(\mct) = v_j p_j
		\end{aligned}.
	\end{equation}
	Now, we show $Q_\mcs \subseteq \text{conv}\left\{ \bs{q}(\mcal{T}) \right\}_{\mct\supseteq \mcs}$.
	Suppose this were not true, i.e., there exists a $\bs{p} \in Q_\mcs$ that cannot be written as a convex combination of the vectors $\{ \bs{q}(\mcal{T}) \}_{\mct\supseteq \mcs}$.
	The set $Q_\mcs$ itself is a convex polytope (convex hull of a finite number of vertices) because it can alternately be defined by a set of linear inequalities.
	Every vertex $\bs{r}$ of $Q_\mcs$ must have the property that $v_i r_i = v_j r_j$ for all $i,j \in \mcs$.
	This is to say that $\bs{r}$ must have support over $\mcs$.
	Moreover, there must exist a vertex $\bs{r}$ of $Q_\mcs$ that is not part of the collection $\{ \bs{q}(\mcal{T}) \}_{\mct\supseteq \mcs}$.
	Suppose the support of such a vertex $\bs{r}$ is $\mct \supseteq \mcs$.
	This vertex is the probability vector lying in $Q_\mcs$ with support over $\mct$, i.e., $\bs{r} = Q_\mcs \cap \Delta_\mct$.

	Since $Q_\mct \subseteq Q_\mcs$ and $\Delta_\mct \subset Q_\mct$, we can also write $\bs{r} = Q_\mct \cap \Delta_\mct$.
	However, this is precisely $\bs{q}(\mct)$. This leads to a contradiction.
\end{proof}

The next lemma provides expressions for $\UB(\bs{q}(\mcs))$ at the equalizing points $\bs{q}(\mcs)$. 

\begin{lemma}\label{lem:UB_obj_equalized}
    Consider any $\mcs \subseteq \mcc$, where $|\mcs| = k$. Then,
    \begin{equation}
        \UB(\bs{q}(\mcs)) =
        \begin{cases}
            &1-\frac{k^2}{2}\frac{Y}{X} \cdot \frac{\prod_{j\in\mcs} v_j}{\sum_{\ell \in \mcs} \left( \prod_{j\in \mcs\setminus \ell} v_j \right)}, \\
			&\hspace{40mm}\text{if }   kY \leq X \\
            &1-\left(k - \frac{X}{2Y}\right) \cdot \frac{\prod_{j\in\mcs} v_j}{\sum_{\ell \in \mcs} \left( \prod_{j\in \mcs\setminus \ell} v_j \right)}, \\
			&\hspace{40mm}\text{if } kY > X
        \end{cases}.
    \end{equation}
\end{lemma}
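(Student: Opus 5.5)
The plan is to substitute $\bs{q}(\mcs)$ directly into the definition \eqref{eq:UB_obj} of $\UB(\cdot)$. The whole computation reduces to evaluating the two scalar quantities $\max_{\overbfs} v_\bf q_\bf(\mcs)$ and $\bv^\top\bs{q}(\mcs)$. It is convenient to introduce the shorthand
\begin{equation*}
	m_\mcs := \frac{\prod_{j\in\mcs} v_j}{\sum_{\ell\in\mcs}\left(\prod_{j\in\mcs\setminus\ell} v_j\right)},
\end{equation*}
which is exactly the common factor appearing in both branches of the claimed formula.

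\textbf{Step 1.} Compute $v_\bf q_\bf(\mcs)$ for each $\bf$ using \eqref{eq:UB_q_explicit}.
\begin{itemize}
	\item For $\bf\in\mcs$, multiplying the numerator $\prod_{i\in\mcs\setminus\bf} v_i$ by $v_\bf$ gives $\prod_{j\in\mcs} v_j$. Hence $v_\bf q_\bf(\mcs) = m_\mcs$, independently of $\bf$. This is the equalizing property already noted after the definition of $Q_\mcs$.
	\item For $\bf\notin\mcs$, we have $q_\bf(\mcs)=0$, so $v_\bf q_\bf(\mcs)=0<m_\mcs$, since all $v_j>0$.
\end{itemize}
It follows that $\max_{\overbfs} v_\bf q_\bf(\mcs) = m_\mcs$, and also
\begin{equation*}
	\bv^\top\bs{q}(\mcs) = \sum_{\bf\in\mcs} v_\bf q_\bf(\mcs) = k\, m_\mcs .
\end{equation*}

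\textbf{Step 2.} Check which case of \eqref{eq:UB_obj} applies. The case condition is
\begin{equation*}
	\frac{Y}{X}\bv^\top\bs{q}(\mcs) - \max_{\overbfs} v_\bf q_\bf(\mcs) = m_\mcs\left(\frac{kY}{X}-1\right).
\end{equation*}
Since $m_\mcs>0$, its sign is the sign of $kY-X$. So the first branch applies exactly when $kY\le X$, and the second exactly when $kY>X$. This matches the case split in the statement.

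\textbf{Step 3.} Substitute into each branch.
\begin{itemize}
	\item First branch ($kY\le X$): $1-\frac{Y}{2X}\cdot\frac{(k m_\mcs)^2}{m_\mcs} = 1-\frac{k^2}{2}\frac{Y}{X} m_\mcs$.
	\item Second branch ($kY>X$): $1-k m_\mcs+\frac{X}{2Y}m_\mcs = 1-\left(k-\frac{X}{2Y}\right)m_\mcs$.
\end{itemize}
Both agree with the claimed expressions.

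There is no real obstacle here; the argument is direct algebra. The only point needing care is Step 1: one must confirm that the maximum of $v_\bf q_\bf(\mcs)$ is attained inside $\mcs$, which uses positivity of the $v_j$. One should also read the index in \eqref{eq:UB_q_explicit} (written $v_j$ under $\prod_{i}$) as $v_i$, so that the numerator is $\prod_{i\in\mcs\setminus\bf}v_i$.
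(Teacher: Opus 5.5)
Your proposal is correct and takes the same route as the paper, whose proof simply substitutes $\bs{q}(\mcs)$ from \eqref{eq:UB_q_explicit} into \eqref{eq:UB_obj}. You have made the intermediate steps explicit: $\max_{\bf} v_\bf q_\bf(\mcs) = m_\mcs$ (using $v_j>0$), $\bv^\top\bs{q}(\mcs) = k\,m_\mcs$, and the case condition reduces to the sign of $kY-X$. Your reading of the index typo in \eqref{eq:UB_q_explicit} is also the intended one.
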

\begin{proof}
	This result follows directly from applying the expressions \eqref{eq:UB_q_explicit} to the objective function \eqref{eq:UB_obj}.
\end{proof}

The next lemma establishes that, when restricted to certain sub-domains, the objective function is concave.

\begin{lemma}\label{lem:UB_concave}
    For each $\overbfs$, consider the function $\UB_\bf(\bs{p})$, defined as $\UB(\bs{p})$ restricted to the domain $Q_{\{\bf\}}$. 
	Then, $\UB_c(\bs{p})$ is concave.
\end{lemma}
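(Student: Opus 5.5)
On $Q_{\{\bf\}}$ the maximum $\max_i v_i p_i$ is attained at the fixed index $\bf$, so it equals the linear function $v_\bf p_\bf$. Write $L(\bs{p}) := \bv^\top\bs{p}$ and $m(\bs{p}) := v_\bf p_\bf$; both are linear in $\bs{p}$, and $Q_{\{\bf\}}$ is a convex polytope. Note also that $m>0$ on $Q_{\{\bf\}}$: since $\bs{p}\in\Delta$, some $p_i>0$, so $\max_i v_ip_i>0$. On this domain,
\[
\UB_\bf(\bs{p}) = \begin{cases} 1 - \frac{Y}{2X}\frac{L(\bs{p})^2}{m(\bs{p})}, & \frac{Y}{X}L \le m,\\ 1 - L + \frac{X}{2Y} m, & \frac{Y}{X}L > m. \end{cases}
\]
The plan is to recognize this as a minimum over a family of affine functions, which is automatically concave.

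Recall from the proof of Lemma~\ref{lem:UB_opt} that $\UB(\bs{p}) = \min_{\dy\in[0,1]} T(\dy,\bs{p})$. Restricted to $Q_{\{\bf\}}$, this reads
\[
T(\dy,\bs{p}) = 1 - \dy L(\bs{p}) + \dy^2 \frac{X}{2Y} m(\bs{p}).
\]
For each fixed $\dy$, this is affine in $\bs{p}$. A pointwise infimum of affine functions is concave, so $\UB_\bf = \min_{\dy\in[0,1]} T(\dy,\cdot)$ is concave on the convex set $Q_{\{\bf\}}$.

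The only thing to verify is that this minimum reproduces both branches of \eqref{eq:UB_obj}. This is exactly the computation of $\dy^*$ in Lemma~\ref{lem:UB_opt}, which I take as given. The unconstrained minimizer of the convex quadratic $T(\cdot,\bs{p})$ is $\frac{Y L}{X m}$. Clipping it to $[0,1]$ gives the first branch when $\frac{Y}{X}L\le m$ and the second branch otherwise; that is, $\dy^*=\min\{\frac{YL}{Xm},1\}$.

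The main obstacle is conceptual rather than computational. Concavity fails on all of $\Delta$, because $\max_i v_ip_i$ is convex and enters $T$ with a positive coefficient. The role of restricting to $Q_{\{\bf\}}$ is precisely to make that max linear, so that $T$ is affine in $\bs{p}$. As a sanity check, one can verify directly that $-L^2/m$ is concave: the perspective $L^2/m$ is jointly convex for $m>0$. One can also confirm that the two branches agree continuously on the boundary $\frac{Y}{X}L=m$, where both equal $1-\frac{Y}{2X}\frac{L^2}{m}$. Neither check is needed once the min-of-affine representation is in place.
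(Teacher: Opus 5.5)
Your proof is correct, but it takes a different route from the paper.

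The paper works with the closed form \eqref{eq:UB_obj} directly. It restricts to a segment $t\bs{p}+(1-t)\bs{p}'$ in $Q_{\{c\}}$ and observes that the second branch is affine. On the first branch it computes $G''(t)$ for $G=L^2/m$ via the quotient rule and shows its numerator is a perfect square. In effect, it verifies by hand the joint convexity of the perspective $L^2/m$, which you mention only as a sanity check.

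You instead reuse the variational structure behind Lemma~\ref{lem:UB_opt}. Once $\max_i v_ip_i$ is pinned to the linear function $v_cp_c$ on $Q_{\{c\}}$, each $T(\dy,\cdot)$ is affine. Hence $\UB_c=\min_{\dy\in[0,1]}T(\dy,\cdot)$ is concave as an infimum of affine functions.

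Your route treats both branches and the interface $\{\frac{Y}{X}L=m\}$ in one stroke. It also explains why restricting to $Q_{\{c\}}$ is exactly what is needed. The paper handles the interface by appeal to continuity, which on its own is not sufficient. A function can be continuous and concave on each side of a hyperplane yet have a convex kink there, as $|x|$ does. One must also check that the one-sided slopes agree. They do here: at the interface, where $\dy^*=1$, both branches have gradient $(-1,\frac{X}{2Y})$ with respect to $(L,m)$. Your min-of-affine representation makes that check unnecessary. In exchange, the paper's approach is self-contained in the closed form and does not rely on the $\dy$-parametrization.

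One small remark on the paper's display of $T(\dy,\bs{p})$. It begins $1-\dy+\cdots$, dropping the factor $\bv^\top\bs{p}$, since $\sum_c v_c(1-p_c\dy)=1-\dy\,\bv^\top\bs{p}$. Your expression $1-\dy L+\dy^2\frac{X}{2Y}m$ is the one consistent with $\dy^*$ and \eqref{eq:UB_obj}. So it was right to re-derive the minimization rather than cite that display.
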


\begin{proof}
    To show $\UB_\bf(\bs{p})$ is concave, it suffices to show that for any $\bs{p},\bs{p}' \in Q_{\{\bf\}}$,
	the single-variable function $UB_\bf(t\bs{p} + (1-t)\bs{p}')$ for $t\in[0,1]$ is concave.

	From \eqref{eq:UB_obj}, it suffices to show that it is concave under the case that $g(\bs{p}), g(\bs{p}') \leq 0$,
	because $UB_\bf$ is continuous and when $g(\bs{p}) > 0$, it is affine.

	Under this case, we can write $UB_\bf(t\bs{p} + (1-t)\bs{p}') = 1 - \frac{Y}{2X}G(t)$,
	where $G(t) := \frac{n(t)}{d(t)}$, $n(t):= (t\bs{v}^\top\bs{p} + (1-t)\bs{v}^\top\bs{p}')^2$, and $d(t) := t v_\bf p_\bf + (1-t) v_\bf p_\bf'$.
	To prove the result, it now suffices to show that $G''(t) \geq 0$.

	Applying the quotient rule twice, the denominator of $G''(t)$ is $d^4(t) > 0$.
	The sign of $G''(t)$ is then the sign of its numerator, which is the sign of the expression
	\begin{equation}\label{eq:G_numerator}
		n''(t)d^2(t) - 2n'(t)d(t)d'(t) + 2(d'(t))^2n(t).
	\end{equation}
	For shorthand, let us write for every $i\in\BFs$, 
	$a_i = v_i(p_i-p_i')$ and
	$b_i = tv_ip_i+(1-t)v_ip_i'$.
	Let us also write $\gamma_\bf = tv_\bf p_\bf + (1-t)v_\bf p_\bf'$
	and $\beta_\bf = v_\bf(p_\bf - p_\bf')$.
	Then, \eqref{eq:G_numerator} can be written as 
	\begin{equation}
		2\left(\gamma_\bf \sum_{i\in\BFs} a_i - \beta_\bf \sum_{i\in\BFs} b_i \right)^2 \geq 0.
	\end{equation}
\end{proof}

From Lemma~\ref{lem:UB_concave}, we can deduce that the optimizer of \eqref{eq:OPT-Y} belongs to a finite collection of points.

\begin{lemma}\label{lem:UB_polyhedral}
	For some $\mcs\subseteq\BFs$, the point $\bs{q}(\mcs)$ is a solution of \eqref{eq:OPT-Y}. 
\end{lemma}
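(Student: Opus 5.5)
The plan is to cover the simplex by the polytopes $Q_{\{\bf\}}$, use concavity on each, and then read off vertices via Lemma~\ref{lem:UB_conv_hull}. First, every $\bs{p}\in\Delta$ has at least one maximizer of $v_\bf p_\bf$, so $\Delta=\bigcup_{\overbfs} Q_{\{\bf\}}$. Therefore
\begin{equation*}
\min_{\bs{p}\in\Delta}\UB(\bs{p}) = \min_{\overbfs}\ \min_{\bs{p}\in Q_{\{\bf\}}} \UB_\bf(\bs{p}).
\end{equation*}
Each inner minimum exists because $Q_{\{\bf\}}$ is compact and $\UB$ is continuous on it. Continuity holds because the two branches of \eqref{eq:UB_obj} agree on the switching surface $\frac{Y}{X}\bv^\top\bs{p}=\max_\bf v_\bf p_\bf$, and $\max_\bf v_\bf p_\bf>0$ on $\Delta$.

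Fix $\bf$. By Lemma~\ref{lem:UB_concave}, $\UB_\bf$ is concave on the convex polytope $Q_{\{\bf\}}$. A concave function on a polytope attains its minimum at an extreme point. To see this, write any $\bs{p}\in Q_{\{\bf\}}$ as a convex combination $\sum_\mct w_\mct \bs{r}_\mct$ of vertices. Concavity gives $\UB_\bf(\bs{p})\ge\sum_\mct w_\mct\UB_\bf(\bs{r}_\mct)\ge\min_\mct \UB_\bf(\bs{r}_\mct)$. By Lemma~\ref{lem:UB_conv_hull} with $\mcs=\{\bf\}$, we have $Q_{\{\bf\}}=\mathrm{conv}\{\bs{q}(\mct)\}_{\mct\ni\bf}$, so every point of $Q_{\{\bf\}}$ is a convex combination of the finitely many points $\bs{q}(\mct)$ with $\mct\ni\bf$. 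The same inequality applied to these points shows that $\min_{Q_{\{\bf\}}}\UB_\bf$ equals $\UB(\bs{q}(\mct))$ for some $\mct\ni\bf$; we do not even need them to be exact vertices.

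Taking the minimum over $\bf$, the global minimum of \eqref{eq:OPT-Y} equals $\UB(\bs{q}(\mcs))$ for some $\mcs\subseteq\BFs$. Since $\bs{q}(\mcs)\in\Delta$, this point is a minimizer, which proves the claim.

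The main subtlety is Lemma~\ref{lem:UB_concave}. Its concavity computation is valid along segments in $Q_{\{\bf\}}$, where $\max_i v_ip_i=v_\bf p_\bf$ is linear, so the denominator $d(t)$ is affine. One also has to justify concavity across the branch switch of \eqref{eq:UB_obj}. The way to handle this is to note that on $Q_{\{\bf\}}$ we have $\UB_\bf(\bs{p})=\min_{\dy\in[0,1]}T(\dy,\bs{p})$, since the closed-form $\dy^*$ was derived as this minimizer. For fixed $\dy$, $T(\dy,\cdot)=1-\dy\bv^\top\bs{p}+\dy^2\frac{X}{2Y}v_\bf p_\bf$ is affine in $\bs{p}$ on $Q_{\{\bf\}}$. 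A pointwise minimum of affine functions is concave. This gives concavity on all of $Q_{\{\bf\}}$ directly and settles the obstacle without case analysis.
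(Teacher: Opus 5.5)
Your proof is correct and follows the paper's route: cover $\Delta$ by the polytopes $Q_{\{c\}}$, combine concavity of $\UB$ on each (Lemma~\ref{lem:UB_concave}) with $Q_{\{c\}}=\mathrm{conv}\{\bs{q}(\mct)\}_{\mct\ni c}$ (Lemma~\ref{lem:UB_conv_hull}) to push the minimum onto some $\bs{q}(\mct)$, and then minimize over $c$. Your observation that $\UB(\bs{p})=\min_{\delta\in[0,1]}T(\delta,\bs{p})$, with $T(\delta,\cdot)$ affine on $Q_{\{c\}}$, is a worthwhile addition: it establishes concavity across the branch switch of \eqref{eq:UB_obj}, which the paper's proof of Lemma~\ref{lem:UB_concave} handles only by an appeal to continuity.
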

\begin{proof}
	This follows from the fact that the minimum of a concave function over convex polyhedral constraints is attained at a vertex of the constraint set.
	By Lemma \ref{lem:UB_conv_hull}, $Q_{\{\bf\}}$ is a convex polytope, 
	and by Lemma \ref{lem:UB_concave}, $\UB_\bf(\bs{p})$ on the domain $Q_{\{\bf\}}$ is concave.
	Therefore, $\arg\min_{\bs{p}\in Q_{\{\bf\}}} \UB_\bf(\bs{p})$ belongs to the set of vertices $\{\bs{q}(\mct)\}_{\mct\supseteq\{c\}}$.
	Let $\UB_\bf^*$ denote this minimum value.
	Then we have $\UB^* = \min_{\overbfs} \UB_\bf^*$.
\end{proof}
This lemma says that one could search over a finite number of values, $\{\UB(\bs{q}(\mcs))\}_{\mcs\subseteq\BFs}$, in order to solve \eqref{eq:OPT-Y}.
This search, however, is over a set of $2^\numbfs$ points, which scales exponentially with the number of contests.

The next result demonstrates that choosing the top $k$ contests is always weakly preferred over choosing any other subset of contests of size $k$.

\begin{lemma}\label{lem:UB_topk}
    For any subset $\mcs\subseteq \BFs$ of size $k$, i.e. $|\mcs| = k$, it holds that
    \begin{equation}
        \UB(\bs{q}([k]) \leq \UB(\bs{q}(\mcs)).
    \end{equation}
\end{lemma}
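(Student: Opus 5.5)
The plan is to use the closed form of Lemma~\ref{lem:UB_obj_equalized}. For $|\mcs|=k$, the value $\UB(\bs{q}(\mcs))$ depends on $\mcs$ only through the quantity
\begin{equation*}
	h(\mcs) := \frac{\prod_{j\in\mcs} v_j}{\sum_{\ell\in\mcs}\prod_{j\in\mcs\setminus \ell} v_j} = \left(\sum_{\ell\in\mcs} \frac{1}{v_\ell}\right)^{-1}.
\end{equation*}
The identity follows by dividing numerator and denominator by $\prod_{j\in\mcs} v_j$. Everything else in the expression, namely the case condition $kY\le X$ versus $kY>X$ and the coefficients $\frac{k^2}{2}\frac{Y}{X}$ and $k-\frac{X}{2Y}$, depends only on $k$, $X$, $Y$. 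So for fixed $k$ the same branch of Lemma~\ref{lem:UB_obj_equalized} applies to $\mcs$ and to $[k]$.

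In both branches $\UB(\bs{q}(\mcs)) = 1 - \kappa_k\, h(\mcs)$, with $\kappa_k$ equal to either $\frac{k^2 Y}{2X}$ or $k-\frac{X}{2Y}$. The first is clearly positive. For the second, note that in the branch $kY>X$ we have $\frac{X}{2Y}<\frac{k}{2}$, so $\kappa_k > k/2 >0$. Hence $\UB(\bs{q}(\mcs))$ is a nonincreasing function of $h(\mcs)$, and it suffices to show that $h([k])\ge h(\mcs)$.

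That inequality is immediate from the harmonic-sum form. The map $\mcs\mapsto\sum_{\ell\in\mcs}1/v_\ell$ is minimized over $k$-subsets by choosing the $k$ largest values, i.e.\ $[k]$. One can see this by matching the $i$-th largest element of $\mcs$ with $v_i$: the $i$-th largest element of $\mcs$ is at most $v_i$, so its reciprocal is at least $1/v_i$. Taking reciprocals then gives $h([k])\ge h(\mcs)$, which completes the argument.

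The only potential obstacle is verifying the closed form from Lemma~\ref{lem:UB_obj_equalized}, in particular that the case split there is exactly $kY\le X$, so that it does not depend on which $k$-subset is chosen. I would double-check this directly. At $\bs{q}(\mcs)$ every $v_c q_c$ with $c\in\mcs$ equals a common value $m=h(\mcs)$, and $\bv^\top\bs{q}(\mcs)=km$. The condition in \eqref{eq:UB_obj} therefore reads $\frac{Y}{X}km - m \le 0$, i.e.\ $kY\le X$, independent of $\mcs$. Substituting these values into \eqref{eq:UB_obj} recovers both formulas and confirms the argument.
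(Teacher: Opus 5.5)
Your proof is correct and follows essentially the same route as the paper: both reduce, via Lemma~\ref{lem:UB_obj_equalized}, to showing $h([k])\ge h(\mcs)$ and then compare terms one by one. Your rewriting $h(\mcs)=\bigl(\sum_{\ell\in\mcs}1/v_\ell\bigr)^{-1}$ is a cleaner form of the paper's cross-multiplied product comparison after cancelling the common indices $[k]\cap\mcs$, and your checks that the branch condition $kY\le X$ is independent of the chosen $k$-subset and that $k-\frac{X}{2Y}>0$ in the second branch fill in steps the paper leaves implicit.
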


\begin{proof}
    From Lemma \ref{lem:UB_obj_equalized}, we evaluate the inequality $\UB(\bs{q}([k]) \leq \UB(\bs{q}(\mcs))$ as
    \be
        \ba
            \frac{\prod_{j\in[k]} v_j}{\sum_{\ell \in [k]} \left( \prod_{j\in [k]\setminus \ell} v_j \right)} \geq \frac{\prod_{j\in \mcs} v_j}{\sum_{\ell \in \mcs} \left( \prod_{j\in \mcs\setminus \ell} v_j \right)}.
        \ea
    \ee
    Let $\mcal{I} = [k] \cap \mcs$, and for shorthand, denote $v_{\mcal{I}} := \prod_{j\in\mcal{I}} v_j$, $v_{\mcs\setminus\mcal{I}} := \prod_{j\in\mcs\setminus\mcal{I}} v_j$, and $v_{[k]\setminus\mcal{I}} := \prod_{j\in[k]\setminus\mcal{I}} v_j$. 
	From the above inequality, we then have $\sum_{\ell \in \mcs} v_{[k]\setminus\mcal{I}} \cdot v_{\mcs\setminus \ell} \geq \sum_{\ell \in [k]} v_{\mcs\setminus\mcal{I}}\cdot v_{[k]\setminus \ell}$.
	We rewrite this relation by separating the sums,
	\begin{equation}
		\begin{aligned}
			&\sum_{\ell \in \mcal{I}} v_{[k]\setminus\mcal{I}} \cdot v_{\mcs\setminus \ell} + \sum_{\ell \in \mcs\setminus\mcal{I}} v_{[k]\setminus\mcal{I}} \cdot v_{\mcs\setminus \ell} \\
			&\quad\quad\quad \geq \sum_{\ell \in \mcal{I}} v_{\mcs\setminus\mcal{I}} \cdot v_{[k]\setminus \ell} + \sum_{\ell \in [k]\setminus\mcal{I}} v_{\mcs\setminus\mcal{I}} \cdot v_{[k]\setminus \ell}
		\end{aligned}.
	\end{equation}
    Take any $\ell \in \mcal{I}$. We verify that $v_{[k]\setminus\mcal{I}} \cdot v_{\mcs\setminus \ell} = v_{\mcs\setminus\mcal{I}}\cdot v_{[k]\setminus \ell}$. Indeed, we see that
    \be
        \frac{v_{\mcs\setminus\ell}}{v_{\mcs\setminus\mcal{I}}} = \frac{v_{[k]\setminus\ell}}{v_{[k]\setminus\mcal{I}}} = v_{\mcal{I}\setminus\ell}.
    \ee
    Consequently, the inequality becomes equivalent to:
    \be
        \sum_{\ell \in \mcs\setminus\mcal{I}} v_{[k]\setminus\mcal{I}} \cdot v_{\mcs\setminus \ell} \geq \sum_{\ell \in [k]\setminus\mcal{I}} v_{\mcs\setminus\mcal{I}}\cdot v_{[k]\setminus \ell}.
    \ee
    Now, take any $\ell \in \mcs\setminus\mcal{I}$ and any $\ell' \in [k]\setminus\mcal{I}$. We verify that $v_{[k]\setminus\mcal{I}} \cdot v_{\mcs\setminus \ell} \geq v_{\mcs\setminus\mcal{I}}\cdot v_{[k]\setminus \ell'}$. Indeed, this condition is equivalent to:
    \be
        \frac{v_{\mcs\setminus \ell}}{v_{\mcs\setminus\mcal{I}}} = \frac{v_{\mcal{I}}}{v_\ell} \geq \frac{v_{[k]\setminus \ell'}}{v_{[k]\setminus\mcal{I}}} = \frac{v_{\mcal{I}}}{v_{\ell'}}.
    \ee
    This is satisfied because for any $\ell \in \mcs\setminus\mcal{I}$ and $\ell' \in [k]\setminus\mcal{I}$, we have $\ell > \ell'$, and thus $v_\ell \leq v_{\ell'}$, which is precisely the above condition.
\end{proof}

Lemmas \ref{lem:UB_polyhedral} and \ref{lem:UB_topk} together assert that the solution to \eqref{eq:OPT-Y} can be found by searching over the collection of $\numbfs$ points $\{\bs{q}([k])\}_{k=1}^\numbfs$.
This establishes the proof of Lemma~\ref{lem:UB_analytical}.

\end{document}